\newcommand{\C}{\mathcal C}
\newcommand{\blue}{\textcolor{black}}
\title{Phylogenetic trees defined by at most three characters}
\author{Katharina T. Huber\authornote{1}
\and
Simone Linz\authornote{2}
\and
Vincent Moulton\authornote{1}
\and
Charles Semple\authornote{3}
}
\begin{document}

\maketitle

\begin{abstract}
In evolutionary biology, phylogenetic trees are commonly inferred from a set of characters \blue{(partitions)} of a collection of biological entities (e.g., species or individuals in a population). Such characters naturally arise from molecular sequences or morphological data. Interestingly, it has been known for some time that any
binary phylogenetic tree can be (convexly) defined by a set of at most four characters, and that there are binary phylogenetic trees for which three characters are not enough. Thus, it is of interest to characterise those phylogenetic trees that are defined by a set of at most three characters. In this paper, we provide such a characterisation, in particular proving that a binary phylogenetic tree \blue{$T$} is defined by a set of at most three characters precisely \blue{if $T$ has no internal subtree isomorphic to a certain tree}.
\end{abstract}

\section{Introduction}

In evolutionary biology, phylogenetic trees are typically inferred from alignments of molecular sequence data like DNA or protein sequences~\cite{lemey2009phylogenetic}. Each row of such an alignment represents a biological entity (e.g., a species or an individual in a population) 
and each column is referred to as a {\em character}. In mathematical terms, each character is simply a partition of the set of the biological entities in question.
If a character has only two states that, for example, indicate the presence or absence of a biological feature, then 
the character is called binary and corresponds to a bipartition. More frequently, however, biologists analyse data sets that consist of multistate characters, where a character can take on  two or more states.

A fundamental question in the study of character evolution is whether or not a 
collection $\C$ of characters is {\em compatible} \cite[Chapter 4]{semple2003phylogenetics}. 
Biologically speaking, compatibility of $\C$ indicates that there exists a phylogenetic tree $T$ (i.e., an unrooted tree without degree-two vertices whose set $X$ of leaves corresponds to the biological entities) on  which each character $\chi$ in $\C$ evolves without any so-called parallel or reverse transitions. 
This implies that each character state of $\chi$ only evolves once on $T$, in which case
$\C$ is {\it convex} on $T$. Stated another way, for each $\chi$ in $\C$ 
the subtrees of $T$ spanned by the elements in each of the  parts of $\chi$ are pairwise vertex disjoint.

If $\C$ is a collection of binary characters, the Splits Equivalence Theorem~\cite{buneman1971recovery} can be used 
to decide if $\C$ is compatible. Moreover, an elegant graph-theoretic result that is based on chordalisations of the so-called partition intersection graph ${\rm Int}(\C)$ of $\C$ (formally defined in Section~\ref{preliminaries}) characterises when collections of multistate characters 
are compatible~\cite{buneman1974characterisation,meacham1983theoretical,steel1992complexity}. Based on this characterisation, it was further shown in~\cite{sem02} that there exists a certain type of chordalisation of $\rm{Int}(\C)$ that is unique precisely if $\C$ {\it defines} a phylogenetic tree $T$, that is, $\C$ is convex on $T$ and any other phylogenetic tree on which $\C$ is convex \blue{is} isomorphic to $T$.

\begin{figure}
\center
\input{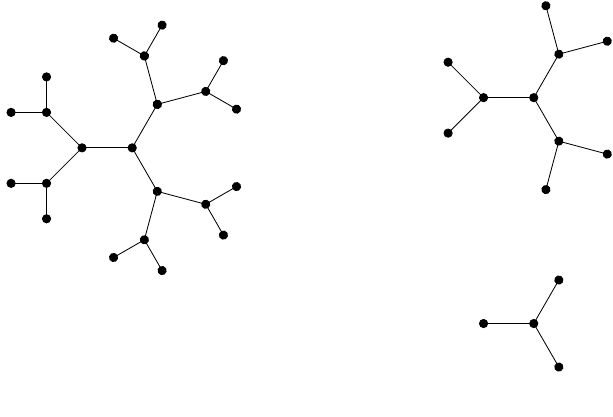_t}
\caption{(a) A phylogenetic tree $T$ with leaf set $X=\{1, 2, \dots, 12\}$ that is not defined by any set of at most three characters. (b) The snowflake. (c) The $3$-star. The snowflake is isomorphic \blue{(in the usual graphical sense)} to the maximal internal subtree of $T$.}
\label{snowflake}
\end{figure}

This last result begs the following question: How many characters are needed to define a given binary phylogenetic tree (i.e., a phylogenetic tree in which
every vertex has degree one or three), when the number of character states is unbounded? Surprisingly, Semple and Steel~\cite{semple2002tree} showed that five characters suffice, a bound that was subsequently sharpened to four by Huber et al.~\cite{huber2005four}. Moreover, as shown in~\cite{semple2002tree}, four is a tight upper bound since there exist binary phylogenetic trees that are not defined by three characters. Indeed, it turns out that the smallest such tree has twelve leaves and is shown in Figure~\ref{snowflake}(a) (see Lemma~\ref{obs:twelve}). Since the collection of binary phylogenetic trees defined by two characters is well understood (see, for example, \cite[Chapter~4.8, Exercise 10]{semple2003phylogenetics} and below), in this paper we provide an answer \blue{to the following problem}: Characterise those binary phylogenetic trees that are defined by a set of at most three characters.

The main result of this paper (Theorem~\ref{thm:main}) gives a solution to this problem in terms of forbidden subtrees in the form of the $6$-leaf tree in
Figure~\ref{snowflake}(b) which is sometimes called the {\em snowflake}. To state it, an {\em internal edge} of a tree $T$ is a non-pendant edge and an {\em internal subtree} of $T$ is a subtree whose edges are all internal.

\begin{theorem}
\label{thm:main}
Let $T$ be a binary phylogenetic tree. Then $T$ is defined by a set of at most three characters if and only if $T$ has no internal subtree isomorphic to the snowflake.
\end{theorem}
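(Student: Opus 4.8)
The plan is to prove the two implications separately, after first recording a convenient reformulation of the forbidden configuration. Writing $T^{\circ}$ for the tree obtained from $T$ by deleting all its leaves (equivalently, the subgraph of $T$ induced by the internal vertices together with the internal edges), I would first observe that $T$ has an internal subtree isomorphic to the snowflake precisely if $T^{\circ}$ contains a vertex of degree three all of whose neighbours also have degree three in $T^{\circ}$. Indeed, the maximal internal subtree of the snowflake is the $3$-star of Figure~\ref{snowflake}(c), so embedding the snowflake as an internal subtree of $T$ amounts to finding a vertex $c$ together with three neighbours $v_1,v_2,v_3$ each of whose three incident edges in $T$ is internal. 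This degree reformulation is what both directions will exploit.

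For necessity I would argue by contradiction, reducing to the twelve-leaf tree of Figure~\ref{snowflake}(a). Suppose $\mathcal{C}=\{\chi_1,\chi_2,\chi_3\}$ defines $T$ and that $T$ has an internal snowflake subtree with centre $c$, middle vertices $v_1,v_2,v_3$, and six outer vertices $w_{ij}$, where each $w_{ij}$ carries a pendant subtree $U_{ij}$ of $T$ lying beyond the snowflake. The idea is to collapse each $U_{ij}$ to a cherry, producing the twelve-leaf tree $T_0$ of Figure~\ref{snowflake}(a), and to restrict each $\chi_k$ to the resulting leaf set. I would then check that the restricted characters remain convex on $T_0$ and, crucially, still define $T_0$: any phylogenetic tree on which the restrictions are convex and that differs from $T_0$ can be ``re-expanded'' by reattaching the subtrees $U_{ij}$ into a tree distinct from $T$ on which all of $\mathcal{C}$ is convex, contradicting that $\mathcal{C}$ defines $T$. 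This would exhibit at most three characters defining $T_0$, contradicting Lemma~\ref{obs:twelve}. The main obstacle here is making the collapse/re-expansion correspondence precise: one must ensure the restriction never silently drops below three characters in a way that loses rigidity, and that both convexity and definability genuinely transfer across the operation.

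For sufficiency I would give an explicit construction, ideally by induction on the number of leaves of $T$. The key lemma to isolate is a description of which sets of internal edges a single convex character can simultaneously rigidify, morally those forming a caterpillar-like (path-decomposable) pattern, together with the fact that rigidifying an internal edge requires blocking both of its nearest-neighbour interchanges. Using the degree reformulation, the hypothesis that no degree-three vertex of $T^{\circ}$ has three degree-three neighbours is exactly what allows the internal edges of $T$ to be distributed among three such rigidifiable families, so that the three associated characters are each convex on $T$ and jointly block every interchange. In the inductive step I would prune a cherry (or a suitable leaf) from $T$, apply the inductive hypothesis to the smaller tree, which again has no internal snowflake, and extend the three characters across the reattached cherry.

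The step I expect to be hardest is the sufficiency construction near the branch vertices of $T^{\circ}$: long chains of degree-three internal vertices are permitted, since they create no snowflake, and must be threaded by the three characters without ever forcing a fourth, while every internal edge keeps both of its interchanges blocked. Verifying that the three-way distribution of internal edges survives the inductive reattachment, and that the absence of the snowflake is the precise obstruction to needing a fourth character, is where the real work lies; the necessity direction, by contrast, should follow cleanly once the reduction to Lemma~\ref{obs:twelve} is set up correctly.
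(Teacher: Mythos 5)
Your opening reformulation (internal snowflake $\Leftrightarrow$ a degree-three vertex of $T^{\circ}$ all of whose neighbours have degree three in $T^{\circ}$) is correct, but both of your main arguments have genuine gaps. For necessity, the gap is exactly at the collapse/re-expansion step you flag but do not resolve. Restricting $\mathcal{C}$ to two representative leaves $a_{ij},b_{ij}$ chosen in each pendant subtree $U_{ij}$ preserves convexity but not, without proof, definability; and your proposed repair---re-expanding an alternative tree $T_0'$ for the restricted characters into an alternative tree for $\mathcal{C}$---is not well defined: in $T_0'$ the pair $(a_{ij},b_{ij})$ need not form a cherry, so there is no canonical vertex at which to reattach $U_{ij}$, and parts of the $\chi_k$ that miss every representative leaf vanish under restriction, so you have no prescription for where they sit in the re-expanded tree and no way to conclude that all of $\mathcal{C}$ is convex on it. This transfer is precisely the hard content of the paper's necessity proof: Proposition~\ref{cla:internal} first converts an arbitrary defining triple into one induced by an internal $3$-colouring (via Lemma~\ref{lem:singletons} and surgery on minimal restricted chordal completions), and Lemmas~\ref{fork} and~\ref{balanced} then carry out your collapse one cherry modification at a time, where the rigid local structure of ${\rm Int}(\gamma)$ near a cherry is what makes the completion-uniqueness argument work. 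Nothing in your sketch substitutes for that machinery.

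For sufficiency there are two concrete failures. First, the claim that snowflake-freeness is ``exactly'' what allows the internal edges to be split into three caterpillar-like rigidifiable families cannot be right: every non-caterpillar binary tree admits a proper $3$-colouring of its internal edges (a tree has chromatic index equal to its maximum degree), including the tree $T_0$ of Figure~\ref{snowflake}(a); for that tree the induced characters $\Pi(\gamma)$ are convex and distinguishing, yet by Lemma~\ref{obs:twelve} they do not define $T_0$. So the obstruction is not the existence of such a distribution of internal edges but the uniqueness of the minimal restricted chordal completion of ${\rm Int}(\gamma)$, which your ``rigidifiable families'' lemma never engages. Second, the inductive step ``prune a cherry, apply induction, extend the characters across the reattached leaf'' relies on an extension lemma that is false: deleting one leaf $z$ from a cherry of $T_0$ yields an $11$-leaf snowflake-free tree, which is defined by three characters, while reattaching $z$ recovers $T_0$, which is not; hence no purely local extension across a reattached leaf can exist, and any correct version must invoke the global structure of $T$ in a way your sketch does not supply. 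The paper avoids this trap with a different decomposition: Lemma~\ref{non-cherry-leaf} shows a snowflake-free tree is a cherried caterpillar (handled directly by Lemma~\ref{orchard2}) or has a leaf not in a cherry, in which case $T$ splits as a cherry union $T_1\Box T_2$ of two smaller snowflake-free trees and Lemma~\ref{amalgam2} applies.
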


The analogous result for binary phylogenetic trees defined by a set of at most two characters is \blue{given by the following theorem}, an immediate consequence of a result \blue{stated in~\cite{semple2003phylogenetics}} (see Theorem~\ref{caterpillar2}). Up to isomorphism, we refer to the unique tree with four vertices, three of which are leaves, as the {\em $3$-star}. \blue{An illustration of the $3$-star is shown in Figure~\ref{snowflake}(c).}

\begin{theorem}
Let $T$ be a binary phylogenetic tree. Then $T$ is defined by a set of at most two characters if and only if $T$ has no internal subtree isomorphic to the $3$-star.
\label{two}
\end{theorem}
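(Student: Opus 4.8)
The plan is to deduce Theorem~\ref{two} from Theorem~\ref{caterpillar2}, which I take to characterise the binary phylogenetic trees defined by a set of at most two characters as precisely the \emph{caterpillars}, that is, the binary phylogenetic trees whose internal vertices induce a path. Granting this, the theorem reduces to the purely combinatorial claim that a binary phylogenetic tree $T$ is a caterpillar if and only if $T$ has no internal subtree isomorphic to the $3$-star. So the real work is to establish this equivalence, after which combining it with Theorem~\ref{caterpillar2} completes the proof.

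First I would unpack what an internal $3$-star subtree means for a binary tree. Since the $3$-star is $K_{1,3}$ and every internal vertex of $T$ has degree exactly three, a copy of the $3$-star all of whose edges are internal must be centred at an internal vertex $v$ and must use all three edges incident with $v$; such a subtree exists precisely when none of the three neighbours of $v$ is a leaf. Hence \emph{$T$ contains an internal $3$-star if and only if some internal vertex of $T$ has no leaf neighbour}, and so $T$ avoids the internal $3$-star exactly when every internal vertex of $T$ is adjacent to at least one leaf.

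With this reformulation in hand, I would prove both directions. For the forward direction, if $T$ is a caterpillar then its internal vertices lie on a path, so every internal vertex has at most two internal edges and therefore cannot be the centre of an internal $3$-star. For the converse I argue by contraposition: if $T$ is not a caterpillar, then the subtree $T'$ obtained by deleting all the leaves of $T$ is a tree that is not a path, and so $T'$ has a vertex $v$ of degree at least three; since $T$ is binary, $v$ has degree exactly three in $T$, whence all three neighbours of $v$ are internal and $v$ is the centre of an internal $3$-star. This yields the desired equivalence and hence the theorem.

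The argument is short, and I do not expect a serious obstacle. The one point requiring care is the translation step, where one must use the degree-three property of binary phylogenetic trees to see that an internal $3$-star forces \emph{all three} edges at its centre to be internal, rather than merely three edges among possibly more, so that ``no internal $3$-star'' is genuinely equivalent to ``every internal vertex has a leaf neighbour'', and thus to being a caterpillar.
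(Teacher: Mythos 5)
Your proposal is correct and follows the same route as the paper: the paper also obtains Theorem~\ref{two} as an immediate consequence of Theorem~\ref{caterpillar2}, with the only remaining content being the combinatorial equivalence (for binary trees) between being a caterpillar and having no internal subtree isomorphic to the $3$-star, which you spell out via the observation that an internal $3$-star must be centred at a degree-three internal vertex with no leaf neighbour. Your explicit verification of this equivalence is exactly the step the paper leaves implicit, so there is nothing to add.
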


Consisting of two main ingredients, the proof of Theorem~\ref{thm:main} essentially works as follows. First, we define three operations. Two of these operations, which we collectively call cherry modifications, extend a binary phylogenetic tree by attaching either one or two new leaves to a cherry, where a cherry refers to two leaves that are adjacent to the same internal vertex. The third operation, which we call a cherry union, amalgamates two binary phylogenetic trees across two cherries with a leaf in common. Second, we analyse sets of three characters that arise from certain edge-colourings of binary phylogenetic trees called internal $3$-colourings. Using these concepts and extending the concept of the partition intersection graph of a set of characters to the partition intersection graph of an internal $3$-colouring, we consider
how the partition intersection graph arising from an internal $3$-colouring behaves relative to the aforementioned operations. In particular, \blue{for proving the necessary direction of Theorem~\ref{thm:main}, we show that if a binary phylogenetic tree $T$ has an internal subtree isomorphic to the snowflake, then, up to isomorphism, $T$ can be obtained from the binary phylogenetic tree shown in Figure~\ref{snowflake}(a) by applying a sequence of cherry modifications, where each modification results in a binary phylogenetic tree not defined by a set of at most three characters (see Theorem~\ref{sequence1}). Conversely, for the sufficient direction of Theorem~\ref{thm:main} (see Theorem~\ref{3-characters}), we inductively show that if a binary phylogenetic tree $T$ does not have an internal subtree isomorphic to the snowflake, then either it is a special type of binary phylogenetic tree or it is the cherry union of two binary phylogenetic trees each of which is defined by a set of at most three characters. In both cases, it will follow that $T$ is defined by a set of at most three characters.}

The rest of this paper is organised as follows. In the next section, we consider sets of characters that define a binary phylogenetic tree, and state some useful results concerning such sets and their relationship with partition intersection graphs from~\cite{semple2002tree} and~\cite{steel1992complexity}. In Section~\ref{sect:colorings}, we show how to define binary phylogenetic trees using internal edge-colourings. In particular, we \blue{essentially} show that any binary phylogenetic tree with at least six leaves is defined by a set of three characters if and only if it can be defined by an internal $3$-colouring (Proposition~\ref{cla:internal}). In Section~\ref{necessary}, we establish the necessary direction of Theorem~\ref{thm:main}. This relies on the two types of cherry modifications. The sufficient direction of Theorem~\ref{thm:main} is established in Section~\ref{sufficient} and relies on the cherry union of two phylogenetic trees. The paper concludes with a brief discussion in Section~\ref{sect:discuss}.

\section{Preliminaries}
\label{preliminaries}

Throughout the paper, $X$ denotes a finite set with $|X|\ge 3$ and, for any positive integer $k$, we set $[k]=\{1, 2, \dots, k\}$. \blue{Furthermore, for a graph $G$, the vertex and edge sets of $G$ are denoted by $V(G)$ and $E(G)$, respectively.} For concepts from phylogenetics, we shall mainly use the terminology given in~\cite{semple2003phylogenetics}.

\begin{figure}
\center
\input{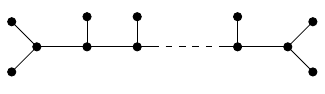_t}
\caption{A caterpillar with leaf set $[n]$, where $n\ge 3$.}
\label{caterpillar1}
\end{figure}

\noindent {\bf Phylogenetic trees.} A {\em phylogenetic \blue{($X$-)}tree} $T$ is a tree with leaf set $X$ and no vertices of degree two. In addition, $T$ is {\em binary} if every \blue{{\em internal vertex}} (i.e., non-leaf vertex) of $T$ has degree three. A binary phylogenetic tree is a {\em caterpillar} if every internal vertex is adjacent to a leaf. Such a phylogenetic tree is shown in Figure~\ref{caterpillar1}. Note that any binary phylogenetic $X$-tree with $3\leq |X|\leq 5$ is a caterpillar.

For a binary phylogenetic $X$-tree $T$, a pair $(x, y)$ of distinct leaves $x, y\in X$ is a {\em cherry} of $T$ if $x$ and $y$ are adjacent to a common vertex. Note that the order of $x$ and $y$ in $(x, y)$ does not matter. Also, note that every binary phylogenetic tree has at least one cherry (see, for example, \cite[Proposition 1.2.5]{semple2003phylogenetics}). In Figure~\ref{example}(a), $(3, 4)$ is a cherry. Furthermore, for a non-empty subset $A\subseteq X$, we let $T(A)$ denote the minimal subtree of $T$ connecting the leaves in $A$.

\begin{figure}
\center
\input{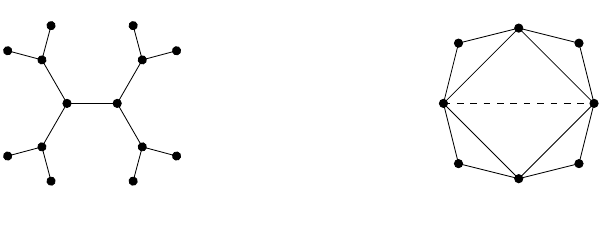_t}
\caption{(a) A binary phylogenetic $X$-tree, where $X=[8]$. (b) The partition intersection graph ${\rm Int}(\C)$ of $\C=\{\chi_1, \chi_2, \chi_3\}$ (solid edges) and a restricted chordal completion of ${\rm Int}(\C)$ (solid and dashed edges), where $\chi_1=\{\{1, 2\}, \{3, 4, 5, 6\}, \{7, 8\}\}$, $\chi_2=\{\{1, 2,3 ,4\}, \{5, 6, 7, 8\}\}$, and $\chi_3=\{\{1, 2, 7, 8\}, \{3, 4\}, \{5, 6\}\}$.}
\label{example}
\end{figure} 

\noindent {\bf Characters.} A {\em character on $X$} is a partition of $X$, that is, a collection of non-empty subsets of $X$ (or {\em parts}) whose pairwise intersections are empty and whose union is $X$. We say that a character $\chi$ on $X$ is {\em convex} on a phylogenetic $X$-tree $T$ if $T(A)$ and $T(B)$ are vertex disjoint for every distinct $A, B\in \chi$, and that a set $\C$ of characters is {\em convex} on $T$ if every character in $\C$ is convex on $T$. Furthermore, $\C$ is {\em compatible} if there is a phylogenetic tree on which $\C$ is convex. A set $\C$ of characters on $X$ {\em defines} $T$ if $\C$ is convex on $T$ and any phylogenetic $X$-tree $T'$ that shares this property with $T$ is \blue{{\em isomorphic}} to $T$ (that is, there is a graph isomorphism between $T$ and $T'$ whose restriction to $X$ is the identity). Note that if $\C$ defines $T$, then $T$ is necessarily binary.

Given a set $\C$ of characters that is convex on a binary phylogenetic tree $T$, we say that $\chi\in \C$ {\em distinguishes} an internal edge $e=\{u, v\}$ of $T$ if there exist distinct $A, B\in \chi$ and distinct elements $x, y\in A$ and $w, z\in B$, such that $u$ but not $v$ lies on the path in $T$ between $x$ and $y$, and $v$ but not $u$ lies on the path in $T$ between $w$ and $z$.
In addition, we say that $T$ is {\em distinguished by $\C$} if every internal edge of $T$ is distinguished by some character in $\C$. To illustrate, consider the collection $\C=\{\chi_1, \chi_2, \chi_3\}$ of characters on $X=[8]$, where $\chi_1=\{\{1, 2\}, \{3, 4, 5, 6\}, \{7, 8\}\}$, $\chi_2=\{\{1, 2, 3, 4\}, \{5, 6, 7, 8\}\}$, and $\chi_3=\{\{1, 2, 7, 8\}, \{3, 4\}, \{5, 6\}\}$. It is easily seen that $\C$ is convex on the binary phylogenetic $X$-tree $T$ shown in Figure~\ref{example}(a). Furthermore, $\C$ distinguishes $T$. For example, the edge $\{u, v\}$ is distinguished by $\chi_1$.

The next lemma is well known but never explicitly stated.

\begin{lemma}
\label{rem:incident}
Let $T$ be a binary phylogenetic $X$-tree, and let $\C$ be a set of characters on $X$ that distinguishes $T$. Then no two incident internal edges of $T$ are distinguished by the same character in $\C$.
\end{lemma}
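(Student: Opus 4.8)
The plan is to argue by contradiction, using the convexity of $\C$ on $T$ (which is implicit in the hypothesis, since distinguishing an edge is only defined when $\C$ is convex) to pin down where the witnessing elements of each part may lie. Suppose two incident internal edges $e_1$ and $e_2$ of $T$ are both distinguished by the same character $\chi\in\C$, and let $p$ be their common vertex. As $T$ is binary, $p$ has degree three; I would write its neighbours as $q_1,q_2,r$ with $e_1=\{p,q_1\}$ and $e_2=\{p,q_2\}$, and let $S_{q_1},S_{q_2},S_r$ be the three components of $T$ obtained by deleting $p$, containing $q_1,q_2,r$ respectively, so that $V(T)$ is the disjoint union of $\{p\}$ with the vertex sets of these components.

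First I would unwind the definition of distinguishing. Applying it to $e_1=\{p,q_1\}$ produces two distinct parts of $\chi$: one part $A_1$ with two elements separated by $p$ and both on the $p$-side of $e_1$ (forcing $p\in T(A_1)$ and $A_1$ to meet both $S_{q_2}$ and $S_r$), and another part $B_1\neq A_1$ with two elements separated by $q_1$ on the $q_1$-side (forcing $q_1\in T(B_1)$). Applying it to $e_2=\{p,q_2\}$ likewise produces a part $A_2$ with $p\in T(A_2)$ meeting both $S_{q_1}$ and $S_r$, together with a part $B_2\neq A_2$ straddling $q_2$. Here the degree-three structure at $p$ is what makes ``$p$ separates two elements lying on the $p$-side of $e_1$'' equivalent to ``$A_1$ meets both $S_{q_2}$ and $S_r$,'' and this translation is the geometric heart of the argument.

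Next I would invoke convexity twice. Since $p\in T(A_1)\cap T(A_2)$ and $T(A)$ and $T(B)$ are vertex-disjoint for distinct parts $A,B$ of $\chi$, the parts $A_1$ and $A_2$ must coincide; call the common part $A$. Separately, I claim $A_1$ meets none of $S_{q_1}$: otherwise the path from such an element to an element of $A_1$ in $S_{q_2}\cup S_r$ would cross $e_1$, placing $q_1$ in $T(A_1)$, which together with $q_1\in T(B_1)$ contradicts the vertex-disjointness of $T(A_1)$ and $T(B_1)$. These two facts then collide: from $A=A_2$ the part $A$ meets $S_{q_1}$, whereas from $A=A_1$ it does not, and this contradiction completes the proof.

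I expect the only real obstacle to be the bookkeeping in the second paragraph, namely correctly reading off from the definition which of the two parts straddles which endpoint of each edge and keeping the three components $S_{q_1},S_{q_2},S_r$ straight, rather than any conceptual difficulty; the two applications of convexity in the third paragraph are then short and decisive.
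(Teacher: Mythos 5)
Your proof is correct and is essentially the paper's own argument: both unwind the definition of distinguishing at the shared degree-three vertex (so that a witnessing path through it must use the two edges other than the avoided one) and then invoke convexity, i.e.\ vertex-disjointness of the subtrees spanned by distinct parts, to force the two parts straddling that vertex to coincide, which is absurd. The only differences are presentational: the paper argues directly that the subtree of the part straddling the common vertex contains the second edge entirely, so that edge must be distinguished by a character other than $\chi$, whereas you phrase the same collision as a contradiction about which components of $T$ minus the common vertex that part meets.
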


\begin{proof}
Let $e=\{u, v\}$ and $f=\{v, w\}$ be internal edges of $T$. Since $e$ is distinguished by $\C$, there is a character $\chi$ in $\C$ and states $A$ and $B$ in $\chi$ such that $T(A)$ contains $u$ but not $v$ and $T(B)$ contains $v$ but not $u$. But then $T(B)$ contains $f$ and, in particular $w$, and so $f$ is distinguished by a character in $\C$ that is not $\chi$.
\end{proof}

\noindent {\bf Partition intersection graphs.} Given a set $\C$ of characters, we let ${\rm Int}(\C)$ denote the {\em partition intersection graph of $\C$}, that 
is, the graph with vertex set
$$\{(\chi,A): \mbox{$\chi\in \C$ and $A\in \chi$}\}$$
and edge set
$$\{\{(\chi, A), (\chi', B)\}: A \cap B \neq\emptyset\}.$$
Note that, necessarily, if $(\chi, A)$ and $(\chi', B)$ are joined by an edge, then $\chi\neq \chi'$.

A graph is {\em chordal} if every cycle with at least four vertices has an edge connecting two nonconsecutive vertices. A {\em restricted chordal completion} $G$ of ${\rm Int}(\C)$ is a chordal graph that is obtained from ${\rm Int}(\C)$ by adding only edges that join vertices whose first components are distinct. We refer to the edges of $G$ not in ${\rm Int}(\C)$ as {\em completion edges}. Furthermore, $G$ is {\em minimal} if the deletion of any completion edge of $G$ results in a graph that is not chordal. Continuing the example above, the partition intersection graph ${\rm Int}(\C)$ of $\C=\{\chi_1, \chi_2, \chi_3\}$ is shown in Figure~\ref{example}(b) (solid edges), and a restricted chordal completion of ${\rm Int}(\C)$ is shown in the same figure (solid and dashed edges).

The next two results, \blue{established in}~\cite[Proposition 3]{steel1992complexity} and~\cite[Theorem 1.2]{sem02}, respectively, will be key in what follows. More specifically, they characterise sets of characters that are convex on a phylogenetic tree and sets that define a phylogenetic tree.

\begin{theorem}
\label{thm:key-help1}
Let $\C$ be a set of characters on $X$. Then $\C$ is convex on a phylogenetic $X$-tree if and only if ${\rm Int}(\C)$ has a restricted chordal completion.
\end{theorem}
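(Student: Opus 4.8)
The plan is to reduce both directions to the classical theorem (Gavril, Buneman, Walter) that a graph is chordal if and only if it is the \emph{subtree intersection graph} of some tree, i.e.\ there is a tree $S$ and an assignment $v\mapsto S_v$ of a subtree $S_v$ of $S$ to each vertex $v$ such that two vertices are adjacent exactly when their subtrees share a node. The role of the ``restricted'' condition will be precisely to encode convexity: within a single character $\chi$ there are to be no edges, and this corresponds to the subtrees $T(A)$, $T(B)$ of distinct parts $A,B\in\chi$ being vertex disjoint.

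For the forward direction, suppose $\C$ is convex on a phylogenetic $X$-tree $T$. I would assign to each vertex $(\chi,A)$ of ${\rm Int}(\C)$ the subtree $T(A)$, and let $G$ be the resulting subtree intersection graph on the same vertex set, so that $(\chi,A)$ and $(\chi',B)$ are adjacent in $G$ precisely when $T(A)$ and $T(B)$ meet. By the classical theorem, $G$ is chordal. It contains ${\rm Int}(\C)$, since $A\cap B\neq\emptyset$ forces $T(A)$ and $T(B)$ to share a common leaf and hence a node. Finally, convexity says that for distinct parts $A,B$ of the \emph{same} character $\chi$ the subtrees $T(A)$ and $T(B)$ are vertex disjoint, so $G$ has no edge joining two vertices with equal first component. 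Thus every edge of $G$---and in particular every completion edge of $G$ over ${\rm Int}(\C)$---joins vertices with distinct first components, and $G$ is a restricted chordal completion of ${\rm Int}(\C)$.

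For the converse, suppose $G$ is a restricted chordal completion of ${\rm Int}(\C)$. Using the classical theorem, I would pass to a representing tree $S$, which can be taken to be a clique tree: its nodes are the maximal cliques of $G$ and, for each vertex $(\chi,A)$, the maximal cliques containing it form a subtree of $S$. The crucial use of the restriction hypothesis is that a clique of $G$ meets each character in at most one part, so each node of $S$ determines at most one part from each $\chi$. For each $x\in X$ the set $\{(\chi,A):x\in A\}$ is a clique of ${\rm Int}(\C)\subseteq G$ (its members pairwise intersect in $x$), hence is contained in some maximal clique; I attach a new leaf labelled $x$ to a node $n_x$ of $S$ realising such a maximal clique. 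After attaching all leaves in $X$, I would repeatedly prune unlabelled leaves and suppress degree-two vertices to obtain a phylogenetic $X$-tree $T$. Convexity of $\C$ on $T$ then follows from the restriction property: for distinct $A,B\in\chi$ the vertices $(\chi,A)$ and $(\chi,B)$ are non-adjacent in $G$, so their subtrees of $S$ are disjoint; since every leaf of $A$ is attached inside the subtree for $(\chi,A)$ and every leaf of $B$ inside the disjoint subtree for $(\chi,B)$, the minimal subtrees $T(A)$ and $T(B)$ lie in disjoint connected regions of $T$ and are therefore vertex disjoint.

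I expect the main obstacle to be the bookkeeping in this last construction rather than any single deep idea: one must check that the pruning-and-suppression step really yields a tree whose leaf set is exactly $X$ and with no degree-two vertices, and that the disjoint-region argument for $T(A)$ and $T(B)$ survives this clean-up (it does, because pruning and suppression only contract or delete, and never merge the two regions). A secondary technical point is the Helly property for subtrees of a tree, which is what guarantees that the pairwise-intersecting clique $\{(\chi,A):x\in A\}$ sits inside a common node $n_x$; this is standard but needs to be invoked explicitly.
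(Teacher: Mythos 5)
Your proof is correct. The paper does not actually prove Theorem~\ref{thm:key-help1} but quotes it (Proposition~3 of Steel, 1992), and your argument---reducing both directions to the characterisation of chordal graphs as subtree intersection graphs, with the forward direction realising the completion as the intersection graph of the subtrees $T(A)$ and the converse building the tree from a clique tree of $G$, the restricted condition encoding convexity in both directions---is essentially the classical proof given in that cited source.
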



\begin{theorem}
\label{thm:key-help}
Let $T$ be a binary phylogenetic $X$-tree, and let $\C$ be a set of characters on $X$. Then $\C$ defines $T$ if and only if
\begin{enumerate}[{\rm (i)}]
\item $\C$ is convex on $T$ and $\C$ distinguishes $T$, and
\item ${\rm Int}(\C)$ has a unique minimal restricted chordal completion.
\end{enumerate}
\end{theorem}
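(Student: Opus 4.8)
The plan is to prove the two directions of the equivalence separately, with Theorem~\ref{thm:key-help1} and the construction underlying it as the main engine. Recall that this construction passes from a phylogenetic $X$-tree $T$ on which $\C$ is convex to a restricted chordal completion of ${\rm Int}(\C)$: one labels every vertex of $T$, for each $\chi\in\C$, by the part $A\in\chi$ whose spanning subtree $T(A)$ contains it, extends this partial labelling over the vertices covered by no part, and then joins $(\chi,A)$ to $(\chi',B)$ whenever some vertex carries both labels; conversely, a clique tree of a restricted chordal completion yields a phylogenetic tree on which $\C$ is convex. The pivot of the whole argument is that the distinguishing condition governs the ambiguity in this labelling, and hence the minimality and uniqueness of the resulting completion.

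For the forward direction, assume $\C$ defines $T$; convexity is then immediate from the definition. The cleanest ingredient is the distinguishing property, for which I would use the following reformulation: an internal edge $e$ of $T$ is distinguished by $\C$ if and only if contracting $e$ destroys the convexity of some character. Indeed, after contracting $e=\{u,v\}$ to a single vertex $w$, two spanning subtrees collide at $w$ precisely when one part reaches $u$ (its two witnessing leaves lie on the $u$-side) and another part reaches $v$, which is exactly the definition of distinguishing. Consequently, if some internal edge $e$ were not distinguished, then $\C$ would remain convex on the tree obtained by contracting $e$, and hence (convexity being preserved under refinement) on the binary tree obtained by resolving the resulting degree-four vertex in the opposite way; this tree is not isomorphic to $T$, contradicting that $\C$ defines $T$. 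Thus $\C$ distinguishes $T$. For the uniqueness of the minimal completion, I would argue that any minimal restricted chordal completion yields, via a clique tree, a binary $X$-tree on which $\C$ is convex, which must be isomorphic to $T$ because $\C$ defines $T$; the distinguishing property is then used to show that this forces the completion to coincide with the one built from $T$, so there is only one minimal completion.

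For the converse, assume (i) and (ii). Since $\C$ is convex on $T$, the tree $T$ is one binary tree with the required property, and I must show there is no other. Let $T'$ be any binary phylogenetic $X$-tree on which $\C$ is convex. By Theorem~\ref{thm:key-help1}, $T'$ gives a restricted chordal completion, which contains a minimal one; by (ii) this minimal completion is the unique completion $G_0$. The distinguishing hypothesis (i) is what makes the passage from $G_0$ back to a tree faithful: a clique tree of $G_0$ recovers $T$ up to isomorphism, and running the same recovery inside the completion produced by $T'$ forces $T'\cong T$. Finally, a non-binary tree on which $\C$ is convex would refine to such a binary tree, hence be a proper contraction of a tree isomorphic to $T$; but contracting any internal edge of $T$ destroys convexity, since every internal edge of $T$ is distinguished. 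Therefore every phylogenetic $X$-tree on which $\C$ is convex is isomorphic to $T$, that is, $\C$ defines $T$.

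The step I expect to be the main obstacle is the correspondence bookkeeping between minimal restricted chordal completions and binary trees, specifically the faithfulness of the completion-to-tree passage. A single tree can give rise to several restricted chordal completions, because the partial labelling by character states may be extended in more than one way over the vertices lying in no spanning subtree, so there is no naive bijection between trees and completions. Pinning down that the distinguishing condition removes exactly this ambiguity---thereby making the minimal completion unique for a defined tree and, conversely, letting the unique minimal completion reconstruct the tree---is the delicate heart of the argument, and is where I would expect to spend most of the effort.
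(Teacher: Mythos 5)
The first thing to note is that the paper contains no proof of this statement to compare against: Theorem~\ref{thm:key-help} is imported directly from \cite[Theorem 1.2]{sem02}, so your attempt has to be judged against what such a proof actually requires. The parts of your outline that are genuinely carried out are correct. The reformulation that an internal edge $e$ is distinguished by a character $\chi$ precisely when $\chi$ fails to be convex on the tree obtained by contracting $e$ is right, and together with the standard fact that convexity is preserved under refinement it yields a clean proof that a defining set must distinguish $T$; the same reformulation also correctly disposes of non-binary trees in the converse, since any such tree on which $\C$ is convex would be a contraction of $T$ across a distinguished edge.

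However, the heart of the theorem is missing, and you say so yourself. In the forward direction you assert that ``the distinguishing property is then used to show that this forces the completion to coincide with the one built from $T$,'' and in the converse you assert that ``a clique tree of $G_0$ recovers $T$ up to isomorphism, and running the same recovery inside the completion produced by $T'$ forces $T'\cong T$.'' Neither claim is argued, and neither is routine. As you observe in your final paragraph, a single tree on which $\C$ is convex can produce several restricted chordal completions (the labelling of vertices lying in no spanning subtree is ambiguous), a chordal graph can have several clique trees, and the completion arising from $T'$ is merely \emph{some} restricted chordal completion containing the unique minimal one $G_0$, not $G_0$ itself---so uniqueness of the minimal completion does not by itself transfer any structural information back to $T'$. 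Establishing the correspondence that would make both assertions true---that, under the distinguishing hypothesis, each tree yields exactly one minimal restricted chordal completion, and that non-isomorphic trees yield distinct ones, via an analysis of clique trees---is precisely the technical content of \cite{sem02}, where it occupies a substantial part of that paper. Your proposal defers exactly this step, labelling it ``the delicate heart of the argument,'' so what you have is a correct scaffolding (convexity, the distinguishing direction, the reduction of non-binary trees to the binary case) built around an unproven core, rather than a proof.
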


\noindent \blue{Because of their frequency of use, Theorems~\ref{thm:key-help1} and~\ref{thm:key-help} will be often used without reference in Sections~\ref{necessary} and~\ref{sufficient}.} 

Theorem~\ref{two} is an immediate consequence of the next theorem. It follows from~\cite[Chapter 4.8, Exercise 10]{semple2003phylogenetics}. However, for completeness, we include a proof.

\begin{theorem}
Let $T$ be a binary phylogenetic tree. Then $T$ is defined by a set of at most two characters if and only if $T$ is a caterpillar.
\label{caterpillar2}
\end{theorem}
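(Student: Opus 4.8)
The plan is to prove both directions by relating the characters that distinguish internal edges to a proper edge-colouring of the maximal internal subtree of $T$. Write $T^{\circ}$ for the subtree of $T$ consisting of the internal vertices and internal edges; since deleting leaves from a tree leaves a tree, $T^{\circ}$ is itself a tree, and it is a path exactly when every internal vertex of $T$ has at most two internal neighbours. Because $T$ is binary, each internal vertex has three neighbours, so it has at most two internal neighbours if and only if it is adjacent to a leaf. Hence $T^{\circ}$ is a path if and only if $T$ is a caterpillar; this equivalence is the hinge of the whole argument.

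For the necessary direction, suppose $\C$ defines $T$ with $|\C|\le 2$. By Theorem~\ref{thm:key-help}, $\C$ distinguishes $T$, so I would colour each internal edge $e$ of $T$ by a character of $\C$ that distinguishes $e$. Two internal edges sharing a vertex are incident, so by Lemma~\ref{rem:incident} they are distinguished by different characters and thus receive different colours. This assignment is therefore a proper edge-colouring of $T^{\circ}$ using at most two colours, which forces $\Delta(T^{\circ})\le 2$ and makes $T^{\circ}$ a path. By the hinge equivalence, $T$ is a caterpillar.

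For the sufficient direction, let $T$ be a caterpillar, so $T^{\circ}$ is a path $v_1v_2\cdots v_{n-2}$ with internal edges $e_1,\dots,e_{n-3}$, and the leaf set inherits a linear order. I would properly $2$-edge-colour this path by alternating colours along it and build one character per colour: for colour $c\in\{1,2\}$, let the parts of $\chi_c$ be the leaf sets of the connected pieces obtained from $T^{\circ}$ by deleting all internal edges of colour $c$. Because the two colour classes of cut edges strictly interleave along the path, each part of every $\chi_c$ is a consecutive interval of leaves, each block retains at least two leaves (using that the two extreme internal vertices each carry two leaves), and every colour-$c$ edge sits between two parts of $\chi_c$ each containing two leaves whose connecting path crosses the correct endpoint. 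This gives that $\chi_c$ distinguishes every colour-$c$ edge, so $\C=\{\chi_1,\chi_2\}$ distinguishes $T$; and since each part is the leaf set of a connected subpath, distinct parts span vertex-disjoint subtrees and $\C$ is convex.

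It then remains to verify condition~(ii) of Theorem~\ref{thm:key-help}, that ${\rm Int}(\C)$ has a unique minimal restricted chordal completion, and this is where the construction pays off. The parts of $\chi_1$ and of $\chi_2$ are two interval partitions of the linearly ordered leaf set whose boundaries interleave, so a part of $\chi_1$ and a part of $\chi_2$ intersect precisely when they are consecutive in this interleaving. Consequently ${\rm Int}(\C)$ is a path, which is already chordal; it therefore has no completion edges and is its own unique minimal restricted chordal completion. By Theorem~\ref{thm:key-help}, $\C$ defines $T$, and $|\C|\le 2$. (The cases $|X|\le 4$, where $T^{\circ}$ has at most one edge, are handled directly by a single character or trivially.) The main obstacle is the bookkeeping in the sufficiency construction: checking that every block keeps two suitably placed leaves so that each coloured edge is genuinely distinguished, and confirming that the interleaving of the two interval partitions makes ${\rm Int}(\C)$ a path rather than producing an induced $4$-cycle.
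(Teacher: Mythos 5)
Your proof is correct and follows essentially the same route as the paper: the necessary direction is the same application of Lemma~\ref{rem:incident} (forbidding an internal vertex with three internal edges), and your alternating $2$-edge-colouring of the internal path produces exactly the paper's explicit characters $\chi_1=\{\{1,2\},\{3,4\},\dots\}$ and $\chi_2=\{\{1,2,3\},\{4,5\},\dots\}$, with the same verification that $\C=\{\chi_1,\chi_2\}$ is convex and distinguishing and that ${\rm Int}(\C)$ is a path, hence chordal, so Theorem~\ref{thm:key-help} applies. Your colouring formulation merely anticipates the internal $2$-colouring viewpoint the paper itself develops later in Proposition~\ref{thm:caterpillar}.
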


\begin{proof}
Let $X$ denote the leaf set of $T$. Using Lemma~\ref{rem:incident}, it is easily checked that $T$ is defined by a set of at most one character if and only if $|X|\in \{3, 4\}$. Thus we may assume that $|X|\ge 5$. If $T$ is defined by a set of two characters, then, by Lemma~\ref{rem:incident}, $T$ has no internal vertex incident with three internal edges. Thus every internal vertex of $T$ is adjacent to a leaf, and so $T$ is a caterpillar.

Conversely, suppose that $T$ is a caterpillar. Without loss of generality, we may assume that the leaf set of $T$ is $[n]$ and that its leaves are labelled as shown in Figure~\ref{caterpillar1}. Say $n$ is even, and consider the set $\{\chi_1, \chi_2\}$ of characters where
$$\chi_1=\{\{1, 2\}, \{3, 4\}, \{5, 6\}, \ldots, \{n-1, n\}\}$$
and
$$\chi_2=\{\{1, 2, 3\}, \{4, 5\}, \{6, 7\}, \ldots, \{n-2, n-1, n\}\}.$$
Now $\{\chi_1, \chi_2\}$ is convex on $T$ and distinguishes $T$. Furthermore, ${\rm Int}(\{\chi_1, \chi_2\})$ is a path, and so ${\rm Int}(\{\chi_1, \chi_2\})$ is chordal. In particular, ${\rm Int}(\{\chi_1, \chi_2\})$ has a unique restricted chordal completion, namely itself. Hence, by Theorem~\ref{thm:key-help}, $\{\chi_1, \chi_2\}$ defines $T$. A similar argument holds if $n$ is odd. Thus if $T$ is a caterpillar, then $T$ is defined by two characters, completing the proof of the \blue{theorem}.
\end{proof}

We end this section with three lemmas. The first is mentioned in~\cite[p.\ 182]{semple2002tree}, and established in~\cite[Section 5]{bordewich2015defining}.

\begin{lemma}
\label{obs:twelve}
The binary phylogenetic tree shown in Figure~\ref{snowflake}(a) is not defined by a set of at most three characters.
\end{lemma}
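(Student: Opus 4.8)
The plan is to establish Lemma~\ref{obs:twelve} by contradiction, using the characterisation in Theorem~\ref{thm:key-help}. Suppose the tree $T$ of Figure~\ref{snowflake}(a), which we denote with leaf set $X=[12]$, were defined by a set $\C$ of at most three characters. Then in particular $\C$ would be convex on $T$ and would distinguish $T$, so every internal edge of $T$ must be distinguished by some character in $\C$. The key structural feature of $T$ is that its maximal internal subtree is the snowflake, whose central vertex meets three internal edges radiating out to three further internal vertices. First I would count the internal edges of $T$ and locate the vertices that are incident with three internal edges (the two ``hubs'' of the snowflake). By Lemma~\ref{rem:incident}, no character can distinguish two incident internal edges, so the three internal edges meeting at each such hub must be distinguished by three \emph{distinct} characters.

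The heart of the argument is a counting/parity obstruction forced by having only three characters. Since the snowflake has a vertex of degree three all of whose incident edges are internal, the three internal edges at that vertex use up all three available characters, one each. I would then trace outward along the symmetric arms of the snowflake and show that propagating these forced distinctness constraints (again via Lemma~\ref{rem:incident} at each successive internal vertex of degree three in the internal subtree) leads to a conflict: somewhere two incident internal edges are forced to receive the same character, contradicting Lemma~\ref{rem:incident}. Concretely, I would set up the assignment of characters to internal edges as a proper $3$-edge-colouring of the internal subtree (this is exactly what Lemma~\ref{rem:incident} demands when $|\C|=3$), and show that the snowflake, as an internal subtree, admits no proper $3$-edge-colouring that can be realised by convex characters on $T$.

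The main obstacle, and the step requiring the most care, is converting ``every internal edge is distinguished by a distinct-at-each-vertex character'' into a genuine contradiction. A proper $3$-edge-colouring of the snowflake does exist combinatorially (the snowflake is a tree, and trees of maximum degree three are always properly $3$-edge-colourable), so the contradiction cannot come from edge-colouring alone. The subtlety is that distinguishing an internal edge $e=\{u,v\}$ requires \emph{two} elements on each side lying in a common part of the character, and these parts must simultaneously be convex on all of $T$. Thus I would examine, for the unique (up to symmetry) $3$-colouring pattern, which characters on $X$ could realise it, and use the condition in Theorem~\ref{thm:key-help}(ii) that ${\rm Int}(\C)$ have a \emph{unique} minimal restricted chordal completion. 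The plan is to show that any such $\C$ either fails to be convex on $T$, or fails to distinguish some internal edge, or yields a partition intersection graph with two distinct minimal restricted chordal completions (reflecting the residual symmetry of the snowflake's arms), any one of which contradicts $\C$ defining $T$. I expect the cleanest route is to exhibit a second, non-isomorphic binary phylogenetic tree $T'$ on $X$ on which every candidate $\C$ is also convex, thereby directly violating the definition of ``$\C$ defines $T$''; the bulk of the work is the finite case analysis verifying that the symmetry of the three snowflake arms always produces such an alternative tree.
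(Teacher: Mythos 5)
First, a point of comparison: the paper does not actually prove Lemma~\ref{obs:twelve} at all; it is quoted from the literature (mentioned in~\cite{semple2002tree} and established in~\cite[Section~5]{bordewich2015defining}). So your proposal is being measured against a citation, not an internal argument. Judged on its own, your strategy is viable and contains the two key insights: that with only three characters, Lemma~\ref{rem:incident} forces the map ``internal edge $\mapsto$ character distinguishing it'' to be a proper $3$-edge-colouring of the internal subtree (note, incidentally, that this tree has \emph{four} vertices incident with three internal edges -- the centre and the three arm vertices -- not two ``hubs''; since every internal edge meets one of them, each internal edge is distinguished by exactly one character); and, crucially, that proper $3$-edge-colourability is not itself the obstruction, so the contradiction must come from convexity together with Theorem~\ref{thm:key-help}(ii). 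Your predicted outcome -- non-unique minimal restricted chordal completions, or equivalently a second tree on which the characters are convex, ``reflecting the residual symmetry of the snowflake's arms'' -- is exactly the right mechanism.

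However, two concrete gaps keep this from being a proof. First, knowing which edges each character distinguishes does not determine the characters: a character realising a colour class can contain singletons or refine the induced partition (for instance, in place of the part consisting of all four leaves of one arm, a character may use one leaf from each cherry of that arm together with two singletons, and it still distinguishes exactly the same edges and is convex on $T$). Your ``which characters could realise it'' analysis is therefore over a much larger, unstructured space; the paper's own machinery closes precisely this gap -- Lemma~\ref{lem:singletons} removes singletons while preserving ``defines $T$'', and then Lemma~\ref{lem:givescoloring} (these are packaged into Proposition~\ref{cla:internal}, whose proof does not use Lemma~\ref{obs:twelve}, so there is no circularity) forces $\C=\Pi(\gamma)$ for an internal $3$-colouring $\gamma$. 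Second, the contradiction is only predicted, never exhibited. To finish: the internal $3$-colouring of this tree is unique up to symmetry; writing $L_{ia}, L_{ib}$ for the two cherries on arm $i$, the induced characters have parts such as $L_{1a}\cup L_{1b}$, $L_{2b}\cup L_{3b}$, etc., and one checks that all three are also convex on the non-isomorphic tree obtained by re-pairing the six cherries across arms (arms $L_{1a}\cup L_{2a}$, $L_{1b}\cup L_{3a}$, $L_{2b}\cup L_{3b}$), contradicting ``defines''. Equivalently, ${\rm Int}(\Pi(\gamma))$ contains an induced $6$-cycle on the six non-cherry parts in which opposite vertices have the same first coordinate, so only short diagonals are admissible chords, yielding two distinct minimal restricted chordal completions (the two inscribed triangles) and contradicting Theorem~\ref{thm:key-help}(ii). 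With those two steps supplied, your outline becomes a complete and correct proof.
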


\begin{lemma}
Let $\C$ be a compatible collection of characters on $X$, and let $G$ be a minimal restricted chordal completion of ${\rm Int}(\C)$. If $e$ is a completion edge of $G$, then $e$ joins two vertices of a vertex-induced cycle of ${\rm Int}(\C)$ with at least four vertices.
\label{completion}
\end{lemma}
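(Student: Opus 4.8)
The plan is to use the minimality of $G$ to manufacture, from the completion edge $e=\{a,b\}$, a chordless cycle of ${\rm Int}(\C)$ through $a$ and $b$. Write $F={\rm Int}(\C)$, so that $G$ is chordal with $E(F)\subseteq E(G)$ and $e\in E(G)\setminus E(F)$. First I would reduce to a $4$-cycle. Since $G$ is minimal, $G-e$ is not chordal, so it contains an induced cycle $C$ with $|V(C)|\ge 4$. As $G=(G-e)+e$ is chordal, $C$ must have a chord in $G$, and the only candidate is $e$; hence $a,b\in V(C)$ are non-consecutive on $C$ and $e$ is the unique chord of $C$ in $G$. Let $P$ and $Q$ be the two $a$--$b$ arcs of $C$. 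All vertices of the cycle $P+e$ lie on $C$, and the only $G$-edge joining two vertices of $C$ off $C$ is $e$ itself; thus $P+e$ is an induced cycle of $G$, and chordality forces $|E(P+e)|=3$, i.e.\ $P$ has length $2$. Likewise $Q$ has length $2$. Hence $C$ is an induced $4$-cycle $a\,p\,b\,q$ of $G-e$ with $\{p,q\}\notin E(G)$, and $p,q$ are common neighbours of $a$ and $b$ in $G$.

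If all four edges $\{a,p\},\{p,b\},\{b,q\},\{q,a\}$ happen to lie in $F$, then $a\,p\,b\,q$ is already an induced $4$-cycle of $F$ through $a$ and $b$, since its only possible chords $\{a,b\}$ and $\{p,q\}$ are non-edges of $F$; so we are done. The difficulty is that these four edges may be completion edges, so in general the $4$-cycle lives in $G$ rather than in $F$, and the hole of $F$ we seek may be strictly longer (as already happens for a $6$-cycle triangulated by a fan). To bridge this I would pass to vertex separators. Note first that $a$ and $b$ lie in a common component of $F$: a completion edge between two components could be deleted without destroying chordality, contradicting minimality. Within that component $a,b$ are non-adjacent in $F$. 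The key claim is that \emph{$a$ and $b$ lie in a common minimal vertex separator $S$ of $F$ for which $F-S$ has two distinct full components} $C_1,C_2$, that is, components each of whose vertices include an $F$-neighbour of $a$ and an $F$-neighbour of $b$. Granting this, choose a shortest $F$-path $R_1$ from $a$ to $b$ with interior in $C_1$ and a shortest such path $R_2$ with interior in $C_2$. Each $R_i$ is induced (a chord would shorten it), each has length $\ge 2$ (as $a\not\sim_F b$), and there are no $F$-edges between $C_1$ and $C_2$; hence $R_1\cup R_2$ is an induced cycle of $F$ of length $\ge 4$ through $a$ and $b$, as required.

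The main obstacle is the key claim, and the heart of it is that $F$-adjacency is strictly weaker than $G$-adjacency. I would establish it through Dirac's theorem that every minimal separator of a chordal graph is a clique, together with the standard correspondence between minimal triangulations and maximal families of pairwise-parallel minimal separators: the common $G$-neighbours $p,q$ of $a$ and $b$ with $\{p,q\}\notin E(G)$ force $a$ and $b$ into a minimal separator of $F$ that the triangulation completes into a clique of $G$, and the two full components are the distinct sides of $F$ on which $p$ and $q$ lie. The care needed is precisely in checking that this separator is a genuine minimal separator of $F$ (so that $a$ and $b$ really do have $F$-neighbours, not merely $G$-neighbours, in each full component) and that $p$ and $q$ fall on different sides; a $G$-separator obtained directly from applying Dirac to $p,q$ places $a,b$ in a $G$-clique but need not supply the $F$-neighbours required to build $R_1$ and $R_2$, which is exactly why the argument must be carried out at the level of $F$.
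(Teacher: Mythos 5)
Your first step is correct and cleanly executed: single-edge minimality gives an induced cycle of $G-e$ whose unique $G$-chord is $e$, and your arc-plus-chord argument correctly collapses it to an induced $4$-cycle $a\,p\,b\,q$ of $G-e$ with $\{p,q\}\notin E(G)$ (the classical unique-chord property of minimal fill-ins); you also correctly identify that the whole difficulty is that this $4$-cycle lives in $G$ rather than in $F={\rm Int}(\C)$. From there your route is genuinely different from the paper's. The paper never localises to a single edge and uses no separator theory at all: it takes $E'$ to be the set of \emph{all} completion edges whose ends do not lie on a common induced cycle of ${\rm Int}(\C)$ with at least four vertices, argues from minimality that $G\setminus E'$ is not chordal if $E'\neq\emptyset$, extracts an induced cycle $C'$ of $G\setminus E'$ with at least four vertices (all of whose $G$-chords would lie in $E'$), and derives a contradiction with the chordality of $G$ in a few self-contained lines.

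However, as submitted your argument has a genuine gap, and it sits exactly at the crux: the ``key claim'' is asserted, the machinery that ``would'' establish it is named but never applied, and the delicate points are flagged (``the care needed'') rather than resolved --- yet that claim is essentially the content of the lemma. Two repairs close it. First, the Parra--Scheffler correspondence you invoke is a theorem about inclusion-minimal triangulations, whereas the paper's $G$ is only assumed single-edge-deletion minimal; these notions coincide by Rose--Tarjan--Lueker's theorem, whose criterion is precisely the unique-chord property you proved in your first step, so the bridge is available but must be stated (the paper's own proof also implicitly reads minimality inclusion-wise when it deletes the whole set $E'$ at once). Second, granting that correspondence, the worry you defer is exactly what the theorem delivers: every minimal separator of the minimal triangulation $G$ is a minimal separator of $F$ itself. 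Concretely, any minimal $p$--$q$ separator $S$ of $G$ contains both $a$ and $b$ (every $p$--$q$ separator contains all common neighbours of $p$ and $q$), so $S$ is a minimal separator of $F$ containing $a$ and $b$, hence $F-S$ has two full components, each containing $F$-neighbours of $a$ and of $b$; your shortest-path construction then finishes correctly, and you never need $p$ and $q$ to ``fall on different sides.'' Separately, your justification that $a$ and $b$ lie in a common component of $F$ --- that a cross-component completion edge ``could be deleted without destroying chordality'' --- is circular under single-edge minimality (deletability is precisely what would need proving); it should simply be dropped, as it is subsumed by the key claim. In short, the route can be made to work, but only after these repairs, and at the cost of two substantial external theorems that the paper's short direct argument avoids entirely.
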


\begin{proof}
Let $E'$ be the subset of completion edges of $G$ whose end vertices do not join two vertices in the same vertex-induced cycle of ${\rm Int}(\C)$ of size at least four, and suppose that $E'$ is nonempty. Then, as $G$ is a minimal restricted chordal completion of ${\rm Int}(\C)$, the graph $G\backslash E'$ is not chordal. But then there is a vertex-induced cycle $C'$ of $G\backslash E'$ with at least four vertices all of which are in the same vertex-induced cycle of ${\rm Int}(\C)$. This implies that $C'$ is a vertex-induced cycle of $G$, a contradiction. This completes the proof of the lemma.
\end{proof}

The third lemma shows that a set of characters that defines a binary phylogenetic tree $T$ and contains a character such that one of its parts is a singleton (i.e., has cardinality one) can be slightly modified so that the resulting set of characters still defines $T$, but has one less singleton.

\begin{lemma}
\label{lem:singletons}
Let $\mathcal C$ be a set of characters on $X$, and suppose that $\mathcal C$ defines a binary phylogenetic $X$-tree $T$. Let $\chi$ be a character in $\mathcal C$, and suppose that $A \in \chi$ with $|A|=1$. Then there exists some $B \in \chi - \{A\}$ such that $\mathcal C'=(\mathcal C-\{\chi\})\cup \{\chi'\}$ defines $T$, where
$$\chi'=(\chi-\{A, B\})\cup \{A\cup B\}.$$
\end{lemma}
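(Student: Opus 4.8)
The plan is to exploit the characterisation in Theorem~\ref{thm:key-help}, which tells us that $\mathcal C'$ defines $T$ precisely when (i) $\mathcal C'$ is convex on $T$ and distinguishes $T$, and (ii) ${\rm Int}(\mathcal C')$ has a unique minimal restricted chordal completion. The key structural observation is that the singleton part $A=\{a\}$ is, in the partition intersection graph, a vertex $(\chi,A)$ whose only neighbours are vertices $(\chi',D)$ with $a\in D$; in particular, for each other character $\chi'\in\mathcal C$ there is exactly one such part $D$, so $(\chi,A)$ has exactly one neighbour per other character. I would first choose $B\in\chi-\{A\}$ carefully. Since $a$ lies in exactly one part of every other character, and since convexity imposes constraints on how these parts meet the remaining parts of $\chi$, the natural choice is to let $B$ be the part of $\chi$ that is ``adjacent'' to $A$ across the relevant edge of $T$ — concretely, the part $B$ such that merging $A$ into $B$ does not break convexity. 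The first step is therefore to verify such a $B$ exists and that $\chi'=(\chi-\{A,B\})\cup\{A\cup B\}$ is still convex on $T$; this amounts to checking that $T(A\cup B)$ remains vertex disjoint from $T(C)$ for every other $C\in\chi$, which should follow because in $T$ the subtree $T(A)=\{a\}$ can be absorbed into $T(B)$ along the pendant edge at $a$ without meeting any other $T(C)$.

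Next I would check that $\mathcal C'$ still distinguishes $T$. Passing from $\chi$ to $\chi'$ only merges two parts, so the only concern is an internal edge that was distinguished by $\chi$ using the part $A$. But $|A|=1$, and distinguishing an edge requires two distinct elements $x,y$ in a part; hence $A$ itself can never be one of the two parts witnessing that $\chi$ distinguishes an edge. Consequently every internal edge distinguished by $\chi$ is witnessed by two parts of $\chi$ of size at least two, and if at most one of them is $B$ the witness survives verbatim in $\chi'$; the remaining bookkeeping is to confirm that merging $A$ into $B$ cannot destroy a witness, which again uses $|A|=1$. So the distinguishing condition transfers with essentially no loss.

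The substantive step is condition (ii): I must show ${\rm Int}(\mathcal C')$ still has a \emph{unique} minimal restricted chordal completion. The graph ${\rm Int}(\mathcal C')$ is obtained from ${\rm Int}(\mathcal C)$ by contracting or merging the two vertices $(\chi,A)$ and $(\chi,B)$ into a single vertex $(\chi',A\cup B)$: the new vertex is adjacent to $(\chi',D)$ exactly when $(A\cup B)\cap D\neq\emptyset$, i.e.\ when either $(\chi,A)$ or $(\chi,B)$ was adjacent to $(\chi',D)$. The plan is to use Lemma~\ref{completion}, which says that in a minimal restricted chordal completion every completion edge joins two vertices of an induced cycle of length at least four in ${\rm Int}(\mathcal C)$. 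Because $(\chi,A)$ has exactly one neighbour per other character (as $a$ lies in a single part of each $\chi'$), it has degree at most $|\mathcal C|-1$ and its neighbourhood is ``thin''; I expect to argue that $(\chi,A)$ lies on no induced four-cycle that is not already ``controlled'' by $(\chi,B)$, so merging it into $B$ neither creates new induced long cycles needing fresh completion edges nor destroys the uniqueness of the completion inherited from $\mathcal C$. Formally, I would set up a bijection between minimal restricted chordal completions of ${\rm Int}(\mathcal C)$ and those of ${\rm Int}(\mathcal C')$, transporting completion edges across the merge and invoking Lemma~\ref{completion} to rule out spurious ones.

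The main obstacle I anticipate is precisely this last point: ensuring that merging $(\chi,A)$ with $(\chi,B)$ does not change the set of induced cycles of length $\ge 4$ in a way that spoils uniqueness of the minimal completion. The delicate case is when $(\chi,B)$ had a neighbour $(\chi',D)$ that $(\chi,A)$ did not, or vice versa, so the merged vertex acquires the union of the two neighbourhoods and could sit on cycles that neither original vertex did. Controlling this requires the specific choice of $B$ made in the first step together with convexity of $\mathcal C$ on $T$, which constrains which parts of the various characters can simultaneously intersect — I would lean on the fact that $a$ appears in a unique part of each other character to show the merged neighbourhood introduces no genuinely new induced long cycle, thereby forcing the completion to remain unique.
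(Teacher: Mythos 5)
Your first two steps are sound and match what is needed: the choice of $B$ via a path from the leaf $a$ to the nearest subtree $T(A_j)$ avoiding all other parts' subtrees gives convexity of $\chi'$ on $T$, and your observation that a singleton can never serve as a witness part means the distinguishing property transfers. The genuine gap is your third step: condition (ii) of Theorem~\ref{thm:key-help}, uniqueness of the minimal restricted chordal completion of ${\rm Int}(\mathcal C')$, is never actually established; you give a plan (``I expect to argue\dots'', ``I would set up a bijection\dots'') together with an accurate description of exactly why the plan is in trouble. The graph ${\rm Int}(\mathcal C')$ arises from ${\rm Int}(\mathcal C)$ by identifying the two vertices $(\chi,A)$ and $(\chi,B)$, which are non-adjacent in every restricted chordal completion (completion edges never join vertices with the same first component). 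Identifying two non-adjacent vertices of a chordal graph can destroy chordality: identify the endpoints of an induced path on five vertices and you obtain an induced $4$-cycle. Likewise, the merged vertex, whose neighbourhood is the union of the two old neighbourhoods, can lie on induced cycles of length at least four that correspond to no induced cycle of ${\rm Int}(\mathcal C)$. Consequently there is no canonical correspondence of minimal completions: the image of a minimal restricted chordal completion of ${\rm Int}(\mathcal C)$ under the identification need not be chordal, and a completion edge of ${\rm Int}(\mathcal C')$ at the merged vertex has no well-defined preimage (it could attach to $(\chi,A)$ or to $(\chi,B)$). Controlling such an identification is possible but requires serious structural hypotheses --- compare the proof of Proposition~\ref{cla:internal}, where an analogous vertex identification is tamed only by exploiting cut vertices --- and the thinness of the neighbourhood of $(\chi,A)$ alone does not supply that control, since $(\chi,B)$ may have many neighbours unrelated to $a$.

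The gap is avoidable because the chordal-completion machinery is unnecessary here; this is where the paper takes a different and much shorter route. After choosing $B=A_j$ exactly as you do (so that $\mathcal C'$ is convex on $T$), one shows directly from the definition of ``defines'' that every phylogenetic $X$-tree $T'$ on which $\mathcal C'$ is convex is isomorphic to $T$. The key point is that splitting a singleton off a part preserves convexity on \emph{any} tree: if $\chi'$ is convex on $T'$, then $\chi$ is convex on $T'$ as well, because the leaf $a$ cannot lie in $T'(C)$ for any part $C$ with $a\notin C$ (every leaf of the spanning subtree $T'(C)$ belongs to $C$), and $T'(A_j)\subseteq T'(A_i\cup A_j)$ remains disjoint from the subtrees of all other parts. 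Hence $\mathcal C$ is convex on $T'$, and since $\mathcal C$ defines $T$, the tree $T'$ is isomorphic to $T$. This short argument replaces your entire third step, and it also renders your second step (distinguishing) superfluous; I recommend restructuring along these lines rather than attempting to repair the completion bijection.
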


\begin{proof}
Let $\chi=\{A_1,A_2,\dots,A_k\}$ where $k \ge 2$, and suppose that $A=A_i$ for some $i \in [k]$. Since $\C$ defines $T$, it follows that $\chi$ is convex on $T$. Therefore there exists some $j\in[k]-\{i\}$ and a path in $T$ from the leaf in $A_i$ to a vertex in $T(A_j)$ whose edges are all contained in the set
$$E(T)-\big( E(T(A_1))\cup E(T(A_2))\cup \cdots\cup E(T(A_k))\big).$$
Let $\chi'=(\chi-\{A_i, A_j\})\cup \{A_i\cup A_j\}$ and $\mathcal C'=(\mathcal C-\{\chi\})\cup \{\chi'\}$. By construction, $\chi'$ is convex on $T$, and so $\mathcal C'$ is convex on $T$. We next show that $\mathcal C'$ defines $T$.

If $\mathcal C'$ does not define $T$, then there is a binary phylogenetic $X$-tree $T'$ on which $\mathcal C'$ is convex and $T'$ is not isomorphic to $T$. But then, as $\mathcal C'$ is convex on $T'$, it follows that $\mathcal C$ is also convex on $T'$, contradicting the fact that $\mathcal C$ defines $T$. Thus $\C'$ defines $T$, and so setting $B=A_j$ completes the proof of the lemma.
\end{proof}

\section{Internal $k$-Colourings}
\label{sect:colorings}

Let $k$ be a positive integer, and let $T$ be a phylogenetic $X$-tree. A {\em $k$-assignment}~$\gamma$ is a map $\gamma: E^0(T)\to [k]$, where $E^0(T)$ is
the set of internal edges of $T$. An {\em internal $k$-colouring} of $T$ is a $k$-assignment $\gamma$ with $\gamma(E^0(T))=[k]$ such that every pair of adjacent internal edges in $T$ are assigned different elements in $[k]$. For convenience, we view the elements in $[k]$ as colours. For an internal $k$-colouring $\gamma$ and $c\in [k]$, we let $\pi(\gamma,c)$ denote the character on $X$ that is obtained by removing all internal edges from $T$ that are assigned colour $c$ under $\gamma$ and taking the collection of subsets of $X$ that are contained within each of the resulting connected components. In addition, we set
$$\Pi(\gamma)=\{ \pi(\gamma,c): c\in [k]\}$$
and let ${\rm Int}(\gamma)$ denote the partition intersection graph of~$\Pi(\gamma)$. Clearly, $\Pi(\gamma)$ is convex on $T$ and $T$ is distinguished by $\Pi(\gamma)$. We shall say that a binary phylogenetic $X$-tree is {\em defined by an internal $k$-colouring $\gamma$} if it is defined by $\Pi(\gamma)$. Note that a character on $X$ can contain a singleton whereas a character induced by an internal $k$-colouring, for some $k$, cannot.

The purpose of the next lemma is to clarify under which conditions  a collection~$\C$ of characters that defines a binary phylogenetic tree $T$ equates to a collection $\Pi(\gamma)$ of characters that is induced by an internal $k$-colouring $\gamma$ of $T$.

\begin{lemma}
\label{lem:givescoloring}
Let $T $ be a binary phylogenetic $X$-tree, where $|X|\ge 4$, and let $\C$  be a set of characters on $X$ with cardinality $k$, where \blue{$k\in [3]$}, that is convex on $T$ and also distinguishes $T$. Suppose that no character in $\C$ contains a singleton, that every internal edge in $T$ is distinguished by exactly one character in $\C$,
and that every character in $\C$ distinguishes at least one internal edge of $T$. Then there exists an internal $k$-colouring $\gamma:E^0(T)\to [k]$ such that $\C = \Pi(\gamma)$.
\end{lemma}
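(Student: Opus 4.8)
The plan is to build the colouring $\gamma$ directly from the hypothesis that each internal edge is distinguished by exactly one character, and then to verify that deleting a colour class recovers the corresponding character. Write $\mathcal{C} = \{\chi_1, \ldots, \chi_k\}$, and for an internal edge $e$ let $\gamma(e) = c$ be the unique index with $\chi_c$ distinguishing $e$; this is well defined by hypothesis. First I would check that $\gamma$ is an internal $k$-colouring. Surjectivity of $\gamma$ onto $[k]$ is immediate because each $\chi_c$ distinguishes at least one internal edge, and the requirement that adjacent internal edges receive distinct colours is exactly Lemma~\ref{rem:incident}. It then remains to prove $\Pi(\gamma) = \mathcal{C}$, which I would do by showing $\pi(\gamma, c) = \chi_c$ for each $c \in [k]$.

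Fix $c$ and write $\chi_c = \{A_1, \ldots, A_m\}$. Since $\chi_c$ is convex on $T$, the subtrees $T(A_1), \ldots, T(A_m)$ are pairwise vertex-disjoint, and since no character contains a singleton, each has at least one edge. The heart of the argument, and the step I expect to be the main obstacle, is to prove that every vertex of $T$ lies in $V(T(A_i))$ for some $i$. For a leaf this is clear, and for an internal vertex $w$ adjacent to a leaf it holds because that leaf lies in some part $A_i$ with $|A_i| \ge 2$, forcing $w \in V(T(A_i))$. The remaining case is an internal vertex $w$ all three of whose incident edges are internal. Here I would argue by contradiction: if $w \notin V(T(A_i))$ for every $i$, then by the definition of \emph{distinguishes} no edge incident with $w$ can be distinguished by $\chi_c$, since such an edge would place $w$ on a path between two elements of a common part of $\chi_c$ and hence inside some $T(A_i)$. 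Thus the three pairwise-adjacent internal edges at $w$ all receive colours in $[k] \setminus \{c\}$; but by Lemma~\ref{rem:incident} these colours must be pairwise distinct, which is impossible when $k \le 3$. This contradiction, which is precisely where the bound $k \in [3]$ enters, establishes the coverage claim.

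Given coverage, the disjoint subtrees $T(A_1), \ldots, T(A_m)$ partition $V(T)$, and I would classify the internal edges accordingly. An edge whose endpoints lie in two different subtrees $T(A_i)$ and $T(A_j)$ is distinguished by $\chi_c$ (one checks each endpoint lies on a path between two elements of its part, while the other endpoint does not), hence has colour $c$; conversely an edge inside a single $T(A_i)$ has both endpoints in $T(A_i)$ alone, so by disjointness it cannot be distinguished by $\chi_c$ and its colour is not $c$. Moreover every inter-part edge is internal: a pendant one would have a leaf endpoint in some $A_i$ whose only neighbour would then be forced into $V(T(A_i))$ as well as into a second part-subtree, contradicting disjointness.

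Putting this together, deleting the colour-$c$ edges deletes exactly the edges running between distinct part-subtrees, so the resulting connected components are precisely $T(A_1), \ldots, T(A_m)$, whose leaf sets are $A_1, \ldots, A_m$. Hence $\pi(\gamma, c) = \chi_c$, and ranging over $c \in [k]$ gives $\Pi(\gamma) = \mathcal{C}$, completing the proof.
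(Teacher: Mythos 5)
Your proposal is correct, and although it constructs $\gamma$ exactly as the paper does (colour each internal edge by its unique distinguishing character, with well-definedness, properness via Lemma~\ref{rem:incident}, and surjectivity checked identically), your verification that $\Pi(\gamma)=\C$ takes a genuinely different route. The paper argues by contradiction: since each $\chi_j$ refines $\pi(\gamma,j)$, inequality would force two parts $A_1,A_2$ of some $\chi_j$ into a single part of $\pi(\gamma,j)$; it then takes the shortest path $P$ between $T(A_1)$ and $T(A_2)$, shows $P$ is internal, avoids colour $j$, and has at least two edges, and reaches a contradiction at the vertex $u$ of $P$ adjacent to $T(A_1)$ --- either $u$ meets three internal edges, and the pigeonhole with $k\le 3$ forces the third edge to have colour $j$, violating convexity, or the third edge is pendant and its leaf would be a singleton of $\chi_j$. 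You instead prove $\pi(\gamma,c)=\chi_c$ directly for each colour $c$: first a coverage claim (every vertex of $T$ lies in some $T(A_i)$ with $A_i\in\chi_c$), whose degree-three case is the same pigeonhole --- three pairwise-adjacent internal edges cannot take pairwise distinct colours in $[k]\setminus\{c\}$ when $k\le 3$ --- with the remaining cases handled by the no-singleton hypothesis; then an exact classification of edges, namely that an internal edge has colour $c$ if and only if its endpoints lie in distinct part-subtrees of $\chi_c$, and that every such inter-part edge is internal. Deleting the colour-$c$ edges then visibly leaves exactly the components $T(A_1),\dots,T(A_m)$. Both arguments exploit the same two hypotheses at the same local configurations, but yours buys a stronger intermediate statement (an explicit description of each colour class as the set of inter-part edges, together with vertex coverage by the part-subtrees), whereas the paper's contradiction argument is more localized, needing to analyse only one vertex of one path.
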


\begin{proof}
Let $\C = \{\chi_1,\dots,\chi_k\}$, where \blue{$k\in [3]$}. Consider the $k$-assignment $\gamma: E^0(T) \to [k]$ that takes each internal edge of $T$ to colour $i$ if $\chi_i$ distinguishes that edge. Note that $\gamma$ is well-defined since $|X|\geq 4$ implies that $E^0(T)\not=\emptyset$, and every element of $E^0(T)$ is distinguished by exactly one character in $\C$. Since $\C $ distinguishes $T$, it follows by Lemma~\ref{rem:incident} that any two internal edges in $T$ that are incident with the same vertex are assigned different colours under $\gamma$. Moreover, since every character in $\C$ distinguishes at least one internal edge of $T$, it follows that $\gamma(E^0(T))=[k]$. Hence $\gamma$ is an internal $k$-colouring of $T$. To complete the proof of the lemma, we show that $\C = \Pi(\gamma)$.

For all $i$, let $\pi_i$ denote $\pi(\gamma, c_i)$. Suppose that $\C\neq \Pi(\gamma)$. Since, for all $i$, we have that $\pi_i$ is obtained by deleting all internal edges of $T$ that are distinguished by $\chi_i$, the definition of $\Pi(\gamma)$ implies that there must exist some $\pi_j \in\Pi(\gamma)$ such that \blue{$\chi_j$ refines $\pi_j$} (i.e., there is some $A \in \chi_j$ and some $B \in \pi_j$ such that $A\subsetneq B$). Let $A_1$ and $A_2$ be distinct non-empty subsets of $B$ such that $A_1, A_2\in \chi_j$, and let $P$ be the shortest path in $T$ connecting $T(A_1)$ and $T(A_2)$. Since $|A_1|, |A_2|\ge 2$, the path $P$ consists of internal edges of $T$. Furthermore, as $A_1$ and $A_2$ are subsets of $B$, no edge in $P$ is distinguished by $\chi_j$. To see this, if there is such an edge, then, by construction, $A_1$ and $A_2$ are in different parts of $\pi_j$, \blue{in which case} either $A_1\not\in B$ or $A_2\not\in B$, a contradiction. Thus $P$ has at least two edges; \blue{otherwise, $P$ consists of a single edge distinguished by $\chi_j$}. Let $u$ be the vertex of $P$ that is adjacent to a vertex in $T(A_1)$ but is not in $T(A_1)$, and let $\{u, v\}$ be the edge of $T$ incident with $u$ but not in $P$. If $u$ is incident with three internal edges of $T$, then, as no edge in $P$ is distinguished by $\chi_j$, the edge $\{u, v\}$ is distinguished by $\chi_j$. But then $\chi_j$ is not convex on $T$, a contradiction. It follows that $v$ is a leaf, in which case $v$ appears as a singleton in $\chi_j$; otherwise, the edge of $P$ incident with $u$ and a vertex in $T(A_1)$ is distinguished by $\chi_j$. This last contradiction implies that $\mathcal C=\Pi(\gamma)$.
\end{proof}

Clearly, a binary phylogenetic $X$-tree $T$ is defined by an internal $1$-colouring if and only if $|X|=4$. The next proposition is an immediate consequence of Theorem~\ref{caterpillar2} \blue{and Lemma~\ref{lem:givescoloring}}.

\begin{proposition}
\label{thm:caterpillar}
Let $T$ be a binary phylogenetic $X$-tree, where $|X|\ge 5$. Then $T$ is defined by an internal {\rm 2}-colouring if and only if $T$ is a caterpillar.
\end{proposition}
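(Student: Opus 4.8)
The plan is to derive both directions directly from Theorem~\ref{caterpillar2} and Lemma~\ref{lem:givescoloring}, with Lemma~\ref{rem:incident} doing the bookkeeping in between. For the forward direction, suppose $T$ is defined by an internal $2$-colouring $\gamma$. Then $\Pi(\gamma)$ defines $T$ and $|\Pi(\gamma)|\le 2$, so $T$ is defined by a set of at most two characters, and Theorem~\ref{caterpillar2} immediately yields that $T$ is a caterpillar. This half requires no real work beyond unwinding the definition of ``defined by an internal $k$-colouring''.

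For the converse, suppose $T$ is a caterpillar with $|X|\ge 5$. By Theorem~\ref{caterpillar2}, $T$ is defined by a set of at most two characters, and since $|X|\ge 5$ the argument in the proof of Theorem~\ref{caterpillar2} (via Lemma~\ref{rem:incident}) shows $T$ is not defined by a single character. Hence $T$ is defined by a set $\C$ of exactly two characters. Applying Lemma~\ref{lem:singletons} repeatedly, I may assume in addition that no character in $\C$ contains a singleton, since each application preserves both the number of characters and the property of defining $T$ while strictly decreasing the number of singletons. Because $\C$ defines $T$, Theorem~\ref{thm:key-help} tells us $\C$ is convex on $T$ and distinguishes $T$, so it remains only to verify the two ``how-it-distinguishes'' hypotheses of Lemma~\ref{lem:givescoloring} and then invoke that lemma with $k=2$.

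The crux is verifying, from the caterpillar structure, that every internal edge is distinguished by exactly one character of $\C$ and that each of the two characters distinguishes at least one internal edge. Here I would use that the internal edges of a caterpillar on $n\ge 5$ leaves form a path $e_1,e_2,\dots,e_{n-3}$ of length $n-3\ge 2$ in which consecutive edges are incident. By Lemma~\ref{rem:incident}, no single character distinguishes two incident internal edges. Thus if some $e_i$ were distinguished by both characters, then every internal edge incident with $e_i$ (and at least one such edge exists since $n-3\ge 2$) would be distinguished by neither character, contradicting that $\C$ distinguishes $T$. Hence each internal edge is distinguished by exactly one character. The same incidence constraint forces the distinguishing character to alternate along the path $e_1,\dots,e_{n-3}$, so both characters of $\C$ are used; in particular each distinguishes at least one internal edge.

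With all the hypotheses of Lemma~\ref{lem:givescoloring} now in place, that lemma produces an internal $2$-colouring $\gamma$ with $\C=\Pi(\gamma)$, and since $\C$ defines $T$ so does $\gamma$, completing the converse. I expect the main obstacle to be precisely the paragraph above: pinning down that the ``exactly one character per internal edge'' condition really is forced rather than merely assumed. The point that makes it go through is that this condition follows purely from Lemma~\ref{rem:incident} together with the path structure of the internal edges and the hypothesis $n\ge 5$, independently of which particular singleton-free defining pair $\C$ one arrives at after applying Lemma~\ref{lem:singletons}.
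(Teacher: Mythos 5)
Your proof is correct and follows essentially the paper's intended route: the paper states this proposition as an immediate consequence of Theorem~\ref{caterpillar2} and Lemma~\ref{lem:givescoloring}, and your argument is precisely the fleshed-out version of that derivation (Theorem~\ref{caterpillar2} for the forward direction; Theorem~\ref{caterpillar2}, Lemma~\ref{lem:singletons}, Lemma~\ref{rem:incident}, and the path structure of the caterpillar's internal edges to verify the hypotheses of Lemma~\ref{lem:givescoloring} for the converse). The only point you gloss over is that an application of Lemma~\ref{lem:singletons} could in principle reduce the number of characters (if the merged character coincided with the other one, or became trivial), but this cannot happen here since a single character cannot define a tree with $|X|\ge 5$ by Lemma~\ref{rem:incident}.
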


%

We now turn our attention to internal 3-colourings and establish the main result of this section. In view of Proposition~\ref{thm:caterpillar}, we focus on binary phylogenetic trees that are not caterpillars, that is, binary phylogenetic trees \blue{that have an internal subtree isomorphic to the $3$-star}. The next proposition is used several times in Sections~\ref{necessary} and~\ref{sufficient}.

\begin{proposition}
\label{cla:internal}
Let $T$ be a binary phylogenetic $X$-tree, where $|X|\geq 6$, and suppose that $T$ is not a caterpillar. Then $T$ is defined by a set of three characters on $X$ if and only if $T$ is defined by an internal $3$-colouring of $T$.
\end{proposition}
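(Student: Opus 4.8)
The plan is to prove the two directions separately, with the forward direction carrying essentially all of the work. For the backward direction, suppose $T$ is defined by an internal $3$-colouring $\gamma$. By the definition recalled above, this means precisely that $T$ is defined by $\Pi(\gamma)$, and since $\gamma$ is a $3$-assignment with $\gamma(E^0(T))=[3]$, the set $\Pi(\gamma)=\{\pi(\gamma,c):c\in[3]\}$ consists of at most three characters. Hence $T$ is defined by a set of at most three characters, and this direction is immediate.

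For the forward direction, the strategy is to massage a defining set of characters into one satisfying the hypotheses of Lemma~\ref{lem:givescoloring}, and then invoke that lemma. So suppose $\C$ is a set of at most three characters that defines $T$. First I would repeatedly apply Lemma~\ref{lem:singletons} to remove singletons one at a time, obtaining a set $\C'$ that still defines $T$, contains no singleton, and has $|\C'|\le 3$. By Theorem~\ref{thm:key-help}, $\C'$ is convex on $T$ and distinguishes $T$. Next I would pin down the number of characters: since $T$ is not a caterpillar, it has an internal vertex $w$ incident with three internal edges, and by Lemma~\ref{rem:incident} these three pairwise-incident edges must be distinguished by three pairwise-distinct characters; hence $|\C'|=3$. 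The same observation shows that no two characters can distinguish $T$, so each of the three characters in $\C'$ distinguishes at least one internal edge (otherwise the remaining two would distinguish $T$, which is impossible at $w$).

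It then remains to verify the last hypothesis of Lemma~\ref{lem:givescoloring}, namely that every internal edge of $T$ is distinguished by exactly one character of $\C'$; once this is done, Lemma~\ref{lem:givescoloring} supplies an internal $3$-colouring $\gamma$ with $\C'=\Pi(\gamma)$, and since $\C'$ defines $T$ so does $\Pi(\gamma)$, giving the conclusion. For an internal edge with an endpoint incident to three internal edges, \emph{exactly one} is automatic: the three pairwise-incident edges at that endpoint have, by Lemma~\ref{rem:incident}, pairwise-disjoint nonempty sets of distinguishing characters, and three pairwise-disjoint nonempty subsets of a three-element set must each be a singleton.

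The main obstacle is therefore the remaining case of an internal edge $e=\{u,v\}$ both of whose endpoints are adjacent to a leaf. Here I would argue by contradiction: assuming $e$ is distinguished by two characters, convexity forces the relevant parts on each side of $e$ to have a very constrained form (each must contain the pendant leaf at $u$, respectively at $v$, and reach across the neighbouring internal edge), and I would then exploit the uniqueness of the minimal restricted chordal completion guaranteed by Theorem~\ref{thm:key-help}(ii)---producing, via Lemma~\ref{completion}, an induced cycle of ${\rm Int}(\C')$ that admits two distinct minimal chordal completions---to reach a contradiction. This step, reconciling the local distinguishing data at $e$ with the global uniqueness of the completion, is where the real difficulty lies and will require the most careful case analysis.
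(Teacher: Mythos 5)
Most of your set-up is sound and matches the paper: the backward direction, the removal of singletons via Lemma~\ref{lem:singletons}, the observation that a non-caterpillar forces exactly three characters each distinguishing at least one internal edge, and the fact that any internal edge with an endpoint incident to three internal edges is distinguished by exactly one character. The gap is in your final step: the claim you plan to prove by contradiction --- that in a singleton-free defining set of three characters \emph{every} internal edge is distinguished by exactly one character --- is simply false, so no amount of care with induced cycles and completions will produce the contradiction you are after. Here is a counterexample. Let $T$ be the binary phylogenetic tree on $X=\{1,\dots,7\}$ with cherries $(1,2)$, $(3,4)$, $(5,6)$ at internal vertices $a$, $b$, $c$ respectively, a vertex $m$ adjacent to $a$, $b$ and to a vertex $d$, where $d$ is adjacent to the leaf $7$ and to $c$. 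Note $T$ is not a caterpillar ($m$ is adjacent to no leaf). Take
$$\chi_1=\{\{1,2\},\{3,4,7\},\{5,6\}\},\quad \chi_2=\{\{3,4\},\{1,2,7\},\{5,6\}\},\quad \chi_3=\{\{1,2,3,4\},\{5,6,7\}\}.$$
This set is convex on $T$, distinguishes $T$, and contains no singleton; moreover ${\rm Int}(\{\chi_1,\chi_2,\chi_3\})$ is chordal (one checks it has a perfect elimination ordering: $(\chi_1,\{1,2\})$, $(\chi_1,\{5,6\})$, $(\chi_2,\{3,4\})$, $(\chi_2,\{5,6\})$, $(\chi_3,\{1,2,3,4\})$, then a triangle remains), so by Lemma~\ref{completion} its unique minimal restricted chordal completion is itself, and by Theorem~\ref{thm:key-help} the set \emph{defines} $T$. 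Yet the internal edge $\{d,c\}$ is distinguished by both $\chi_1$ (parts $\{3,4,7\}$ and $\{5,6\}$) and $\chi_2$ (parts $\{1,2,7\}$ and $\{5,6\}$).

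This is exactly why the paper's proof takes a different route at this point: instead of trying to rule out doubly-distinguished edges, it eliminates them one at a time by \emph{modifying} the character set. If $e=\{u,v\}$ is distinguished by $\chi_1$ and $\chi_2$, with $u,v$ adjacent to leaves $x,y$, one replaces $\chi_1$ by $\chi_1'=(\chi_1-\{A_1,A_2\})\cup\{A_1\cup A_2\}$, where $A_1\ni x$ and $A_2\ni y$ are parts of $\chi_1$; in the example above this turns $\chi_1$ into $\{\{1,2\},\{3,4,5,6,7\}\}$, and the resulting set is precisely $\Pi(\gamma)$ for an internal $3$-colouring $\gamma$ of $T$. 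The substantive work of the paper's proof is then to show that this merged set still defines $T$, which is done by tracking how the unique minimal restricted chordal completion transforms under the merge, using the cut vertex $(\chi_3,\{x,y\})$ of ${\rm Int}(\mathcal C)$ and Lemma~\ref{completion}. So the machinery you anticipated (uniqueness of completions, Lemma~\ref{completion}) is indeed what is needed, but it must be deployed to preserve the defining property under a reduction of $|E_2|$, not to show $E_2=\emptyset$ outright.
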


\begin{proof}
If $T$ is defined by an internal $3$-colouring $\gamma$ of $T$, then $T$ is defined by the set $\Pi(\gamma)$ of characters on $X$ and this set has size three. To prove the converse, that is, if $T$ is defined by a set $\C$ of three characters on $X$, then $T$ is defined by an internal $3$-colouring of $T$, we shall freely use Theorem~\ref{thm:key-help}.
	
Let $\C$ be a set of three characters on $X$ that defines $T$. Note that, by repeated application of Lemma~\ref{lem:singletons}, we may assume that $\mathcal C$ contains no character that contains a singleton.	 Furthermore, since $\mathcal C$ defines $T$, every internal edge of $T$ is distinguished by some character in $\mathcal C$ and, since $T$ is not a caterpillar, every character in $\C$ must distinguish at least one internal edge of $T$. Let $E_2$ denote the set of internal edges of $T$ distinguished by exactly two characters. Observe that, as $|\mathcal C| = 3$ and $T$ has at least two internal edges, no internal edge of $T$ is distinguished by exactly three characters in $\C$.

If $E_2=\emptyset$, then, by Lemma~\ref{lem:givescoloring}, the characters in $\mathcal C$ induce an internal $3$-colouring of $T$, and so the converse holds in this case. Therefore assume that $E_2\not=\emptyset$ and let $e=\{u, v\}$ be an edge in $E_2$. If either $u$ or $v$ is not adjacent to a leaf of $T$, then, as $|\mathcal C|= 3$, we have that $e$ is distinguished by exactly one character in $\mathcal C$, a contradiction. So assume that $u$ and $v$ are adjacent to leaves $x$ and $y$, respectively. We next construct from $\mathcal C$ a set $\mathcal C'$ of three characters that defines $T$ such that $e$ is distinguished by exactly one character in $\mathcal C'$, the number of internal edges distinguished by exactly two characters in $\mathcal C'$ is $|E_2|-1$, and no character in $\mathcal C'$ contains a singleton.
This will complete the proof of the converse since, by repeatedly applying this construction to reduce the number of internal edges distinguished by exactly two characters, we eventually obtain a set $\mathcal C^*$ of three characters on $X$ such that each internal edge of $T$ is distinguished by exactly one character in $\mathcal C^*$, no character  in $\mathcal C^*$ contains a singleton, and each character in $\mathcal C^*$ distinguishes some internal edge of $T$. 

Let $\mathcal C=\{\chi_1, \chi_2, \chi_3\}$, and suppose that $e$ is distinguished by $\chi_1$ and $\chi_2$. Let $e_1$ denote the edge incident with $u$ that is neither $e$ nor $\{u, x\}$, and let $e_2$ denote the edge incident with $v$ that is neither $e$ nor $\{v, y\}$. There are two cases to consider depending on whether (i) either $e_1$ or $e_2$ is pendant, and (ii) neither $e_1$ nor $e_2$ is pendant. In what follows, we prove (ii). The proof for (i) is similar, but more straightforward, and is omitted. Now consider (ii). Since $\mathcal C$ distinguishes every internal edge of $T$, it follows \blue{by Lemma~\ref{rem:incident}} that $e_1$, as well as $e_2$, is distinguished by exactly one character, namely $\chi_3$. Let $A_1$ and $A_2$ be the parts in $\chi_1$ so that $x\in A_1$ and $y\in A_2$. Let $\chi'_1=(\chi_1-\{A_1, A_2\})\cup \{A_1\cup A_2\}$ and let $\mathcal C'=(\mathcal C-\{\chi_1\})\cup \{\chi'_1\}$. We now show that $\mathcal C'$ defines $T$.
	
As $\mathcal C$ is convex on $T$ and distinguishes every internal edge of $T$, it follows that $\mathcal C'$ is also convex on $T$ and distinguishes every internal edge of $T$. Consider the partition intersection graph ${\rm Int}(\mathcal C)$ of $\mathcal C$, and let $G$ be the unique minimal restricted chordal completion of ${\rm Int}(\mathcal C)$. By \cite[Proposition 4.1]{sem02}, ${\rm Int}(\C)$ is connected, and so $G$ is connected. Furthermore $(\chi_3, \{x, y\})$ is a cut vertex of ${\rm Int}(\C)$ and so, by Lemma~\ref{completion}, $(\chi_3,\{x, y\})$ is a cut vertex of $G$.

Now, consider the partition intersection graph ${\rm Int}(\mathcal C')$ of $\mathcal C'$. This graph is obtained from ${\rm Int}(\mathcal C)$ by identifying the vertices $(\chi_1, A_1)$ and $(\chi_1, A_2)$, labelling the identified vertex as $(\chi'_1, A_1\cup A_2)$, deleting one of the two resulting parallel edges joining $(\chi_3, \{x, y\})$ and $(\chi'_1, A_1\cup A_2)$, and relabelling all other vertices of the form $(\chi_1, A_i)$, where $i\not\in \{1, 2\}$ with $(\chi'_1, A_i)$. Observe that, as $(\chi_3, \{x, y\})$ is a cut vertex of ${\rm Int}(\mathcal C)$,
$$\{(\chi_3, \{x, y\}), (\chi'_1, A_1\cup A_2)\}$$
is a vertex-cut of ${\rm Int}(\mathcal C')$. Moreover, as ${\rm Int}(\mathcal C)$ is connected, ${\rm Int}(\mathcal C')$ is connected. 

Let $V_x$ (respectively, $V_y$) be the subset of $V({\rm Int}(\mathcal C))-\{(\chi_3, \{x, y\})\}$ with the property that a vertex $v$ is in $V_x$ (respectively, $V_y$) if 
there is a path from $v$ to $(\chi_1, A_1)$ (respectively, $(\chi_1,A_2)$) in ${\rm Int}(\mathcal C)$ avoiding $(\chi_3, \{x, y\})$. Note that $V_x\not=\emptyset$ because $(\chi_1,A_1)\in V_x$, and that $V_y\not=\emptyset$ because $(\chi_1,A_2)\in V_y$. Similarly, let $V'_x$ (respectively, $V'_y$) denote the subset of $V({\rm Int}(\mathcal C'))$ obtained from $V_x$ (respectively, $V_y$) by deleting $(\chi_1, A_1)$ (respectively, $(\chi_1, A_2)$) and replacing $(\chi_1, A_i)$, where $i\not\in \{1, 2\}$, with $(\chi'_1, A_i)$.

Let $G'$ denote the graph obtained from ${\rm Int}(\mathcal C')$ by joining two vertices in $V({\rm Int}(\mathcal C'))$ with an edge precisely if the corresponding vertices of ${\rm Int}(\mathcal C)$ are joined by a completion edge in $G$. More precisely, $G'$ is constructed from ${\rm Int}(\mathcal C')$ as follows:
\begin{enumerate}[(1)]
\item if vertices of the form $(\chi_2, B)$ and $(\chi_3, C)$ are joined by a completion edge in $G$, then the same vertices are joined by an edge in $G'$;
		
\item if a vertex of the form $(\chi_1, A_i)$ and a vertex $t$ are joined by a completion edge in $G$, where $i\not\in \{1, 2\}$, then $(\chi'_1, A_i)$ and $t$ are joined by an edge in $G'$; and

\item if a vertex of the form $(\chi_1, A_i)$ and a vertex $t$ are joined by a completion edge in $G$, where $i\in \{1, 2\}$, then $(\chi'_1, A_1\cup A_2)$ and $t$ are joined by an edge in $G'$.
\end{enumerate}
Note that, as $(\chi_3, \{x, y\})$ is a cut vertex of $G$, it follows from the construction of $G'$ that 
$$\{(\chi_3, \{x, y\}), (\chi'_1, A_1\cup A_2)\}$$ 
is a vertex-cut of $G'$.
	
We now show that $G'$ is a minimal restricted chordal completion of ${\rm Int}(\mathcal C')$. Assume that $C$ is a vertex-induced cycle of $G'$ with at least four vertices. Then, as $G$ is a restricted chordal completion of ${\rm Int}(\mathcal C)$, and $\{(\chi_3, \{x, y\})\}$ and $\{(\chi_3, \{x, y\}), (\chi'_1, A_1\cup A_2)\}$ are vertex-cuts of $G$ and $G'$, respectively, it follows that $(\chi_3, \{x, y\})$ and $(\chi'_1, A_1\cup A_2)$ as well as a vertex in $V'_x$ and a vertex in $V_y'$ are vertices in $C$. 
But, since $\{(\chi_3, \{x, y\}), (\chi'_1, A_1\cup A_2)\}$ is an edge in \blue{${\rm Int}(\C')$ and thus in} $G$, this is a contradiction as $C$ is a vertex-induced cycle of $G'$ with at least four vertices. It follows that $G'$ is a restricted chordal completion of ${\rm Int}(\mathcal C')$. Furthermore, a similar argument shows that if $G'$ is not a minimal restricted chordal completion of ${\rm Int}(\mathcal C')$, then $G$ is not a minimal restricted chordal completion of ${\rm Int}(\mathcal C)$, a contradiction. Thus $G'$ is a minimal restricted chordal completion of ${\rm Int}(\mathcal C')$.

To see that $G'$ is the unique minimal restricted chordal completion of ${\rm Int}(\C')$, suppose that $G_1'$ is also a minimal restricted chordal 
completion of ${\rm Int}(\mathcal C')$. Since $\{(\chi_3, \{x, y\}), (\chi'_1, A_1\cup A_2)\}$ is a vertex-cut of ${\rm Int}(\mathcal C')$ and the two vertices in this vertex-cut are joined by an edge, it follows by Lemma~\ref{completion} that no completion edge in $G'_1$ joins a vertex in $V'_x$ to a vertex in $V'_y$. Now let $G_1$ be the graph obtained from $G_1'$ by reversing the construction described in (1)--(3) above. That is, $G_1$ is obtained from $G'_1$ as follows:
\begin{enumerate}[(1)$'$]
\item if vertices of the form $(\chi_2, B)$ and $(\chi_3, C)$ are joined by a completion edge in $G'_1$, then the same vertices are joined by an edge in $G_1$;

\item if a vertex of the form $(\chi'_1, A_i)$ and a vertex $t$ are joined by a completion edge in $G'_1$, where $i\not\in \{1, 2\}$, then $(\chi_1, A_i)$ and $t$ are joined by an edge in $G_1$; and

\item if $(\chi'_1, A_1\cup A_2)$ and a vertex $t$ in $V'_x$ (resp.\ $V'_y)$ are joined by a completion edge in $G'_1$, then $(\chi_1, A_1)$ (resp.\ $(\chi_1, A_2)$) and $t$ are joined by an edge in $G_1$.
\end{enumerate}
Arguing as before, it is easily checked that $G_1$ is a minimal restricted chordal completion of ${\rm Int}(\mathcal C)$. By the uniqueness of $G$, it follows that $G_1$ is isomorphic to $G$. Since the constructions given by (1)--(3) and (1)$'$--(3)$'$ undo each other, this implies that $G'_1$ is isomorphic to $G'$. It follows that $G'$ is the unique minimal restricted chordal completion of ${\rm Int}(\mathcal C')$. Hence $\mathcal C'$ defines $T$ and $\mathcal C'$ has the desired properties, in particular, the number of internal edges distinguished by exactly two characters in $\mathcal C'$ is $|E_2|-1$ and no character in $\mathcal C'$ contains a singleton.
\end{proof}


\section{Proof of Theorem~\ref{thm:main}: Necessary Direction}
\label{necessary}

We begin the necessary direction of the proof of Theorem~\ref{thm:main} by describing two operations that extend a binary phylogenetic tree. Let $T$ be a binary phylogenetic $X$-tree, where $|X|\ge 3$, and let $(x, y)$ be a cherry of $T$. The first operation, called a {\em fork modification}, adds a new leaf $z\not\in X$ to $(x, y)$ as shown in Figure~\ref{modifications}(a) to obtain a binary phylogenetic $(X\cup \{z\})$-tree $T'$. Formally, $T'$ is obtained from $T$ by subdividing the pendant edge incident with either $x$ or $y$, say $x$, and then adjoining a new leaf $z\not\in X$ to $T$ by adding an edge joining $z$ and the subdivision vertex. The second operation, called a {\em balanced modification}, adds two new leaves $w, z\not\in X$ to $(x, y)$ as shown in Figure~\ref{modifications}(b) to obtain a binary phylogenetic $(X\cup \{w, z\})$-tree $T''$. More precisely, $T''$ is obtained from $T$ by subdividing each of the pendant edges incident with $x$ and $y$, and then adjoining new leaves $w, z\not\in X$ to $T$ by adding an edge joining one of the leaves, say $z$, to the subdivision vertex on the pendant edge incident with $x$ and an edge joining $w$ to the subdivision vertex on the pendant edge incident with $y$. Collectively, we refer to fork and balanced modifications as {\em cherry modifications}. Although not explicitly needed for the paper, it is straightforward to show that if $T$ is a binary phylogenetic $X$-tree, where $|X|\ge 3$, then, up to isomorphism, $T$ can be obtained from a binary phylogenetic tree on three leaves by a sequence of cherry modifications.

\begin{figure}
\center
\input{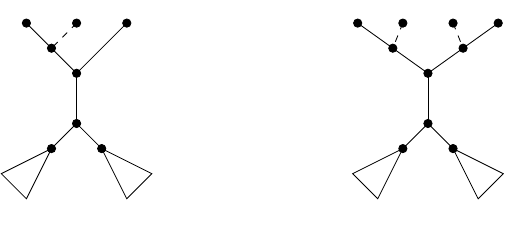_t}
\caption{An illustration of (a) a fork modification and (b) a balanced modification, where $P$ and $Q$ denote the leaf sets of the corresponding subtrees.}
\label{modifications}
\end{figure}

The next lemma shows that a fork modification preserves the property of being defined by at most three characters. For ease of reading, if $\gamma$ is an internal $k$-colouring of a binary phylogenetic tree, we write the vertices of ${\rm Int}(\gamma)$ in the form $(c, A)$ instead of $(\pi(\gamma, c), A)$.

\begin{lemma}
Let $T$ be a binary phylogenetic $X$-tree, where $|X|\ge 3$, and let $T'$ be a binary phylogenetic tree obtained from $T$ by a fork modification. Then $T$ is defined by a set of at most three characters if and only if $T'$ is defined by a set of at most three characters.
\label{fork}
\end{lemma}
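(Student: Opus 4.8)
The plan is to verify the biconditional separately according to whether $T$ is a caterpillar, reducing the substantive case to internal $3$-colourings via Proposition~\ref{cla:internal}.

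First I would record the elementary observation that a fork modification preserves caterpillar-ness. Write $p$ for the common vertex of the cherry $(x,y)$, and let $q$ be the new degree-three vertex of $T'$ adjacent to $x$, to the added leaf $z$, and to $p$. Then every internal vertex of $T'$ is adjacent to a leaf precisely when every internal vertex of $T$ is, so $T$ is a caterpillar if and only if $T'$ is. If $T$ (equivalently $T'$) is a caterpillar, then by Theorem~\ref{caterpillar2} both trees are defined by at most two, hence at most three, characters, and the biconditional holds with both sides true. So I may assume $T$ is not a caterpillar; since every binary phylogenetic tree with at most five leaves is a caterpillar, this forces $|X|\ge 6$, whence $T'$ is a non-caterpillar with $|X|+1\ge 7$ leaves. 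For a non-caterpillar, Theorem~\ref{caterpillar2} excludes being defined by at most two characters, so ``defined by at most three characters'' is equivalent to ``defined by three characters'', which by Proposition~\ref{cla:internal} is equivalent to ``defined by an internal $3$-colouring''. Thus it suffices to show that $T$ admits an internal $3$-colouring defining it if and only if $T'$ does.

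The key step is a local comparison of the two partition intersection graphs. Writing $r$ for the third neighbour of $p$ in $T$ (an internal vertex, as $|X|\ge 6$), the internal edges of $T'$ are exactly those of $T$ together with $\{p,q\}$, and $\{p,q\}$ is adjacent only to $\{p,r\}$ among internal edges. Given an internal $3$-colouring $\gamma$ of $T$, let $c$ be the colour of $\{p,r\}$ and extend $\gamma$ to an internal $3$-colouring $\gamma'$ of $T'$ by assigning $\{p,q\}$ any colour $c'\ne c$. A direct computation of the induced partitions shows that for each colour $c_i\ne c'$ the character $\pi(\gamma',c_i)$ is obtained from $\pi(\gamma,c_i)$ by adjoining $z$ to the part containing $x$ (which is also the part containing $y$), while $\pi(\gamma',c')$ is obtained from $\pi(\gamma,c')$ by removing $x$ from its part and inserting the new part $\{x,z\}$. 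Consequently ${\rm Int}(\gamma')$ is obtained from ${\rm Int}(\gamma)$ by adding a single new vertex $w=(c',\{x,z\})$ whose only neighbours are the two vertices $(c_i,A_i)$, $c_i\ne c'$, where $A_i$ denotes the part of $\pi(\gamma,c_i)$ containing $x$. Since $A_i$ contains both $x$ and $y$ for each such $i$, these two neighbours are adjacent, so $w$ is a simplicial vertex whose neighbourhood is a clique of size two; in particular $w$ lies on no vertex-induced cycle of ${\rm Int}(\gamma')$ of length at least four.

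It remains to transfer uniqueness of the minimal restricted chordal completion across the addition of $w$, which (since $\Pi(\gamma)$ is automatically convex on and distinguishes the tree) is by Theorem~\ref{thm:key-help} exactly what it means for a colouring to define the tree. Because $w$ is simplicial and lies on no induced cycle of length at least four, Lemma~\ref{completion} guarantees that no completion edge of any minimal restricted chordal completion of ${\rm Int}(\gamma')$ is incident with $w$, and adding or deleting $w$ preserves chordality. From this I would deduce that $G\mapsto G\cup\{w\}$ and $G'\mapsto G'-w$ are mutually inverse bijections between the minimal restricted chordal completions of ${\rm Int}(\gamma)$ and those of ${\rm Int}(\gamma')$; hence one graph has a unique minimal restricted chordal completion exactly when the other does. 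This delivers both directions: a defining internal $3$-colouring $\gamma$ of $T$ yields the defining colouring $\gamma'$ of $T'$ above, and conversely a defining internal $3$-colouring $\gamma'$ of $T'$ restricts to $E^0(T)$ and the same correspondence shows the restriction defines $T$. The only subtlety is surjectivity of the restriction: if the colour of $\{p,q\}$ appeared on no other internal edge, the restriction would be a proper internal $2$-colouring of $T$, forcing every internal vertex of $T$ to meet at most two internal edges and hence $T$ to be a caterpillar, contrary to assumption. The main obstacle is precisely this graph-level analysis---verifying that $w$ is simplicial and carrying out the uniqueness transfer through Lemma~\ref{completion}---whereas the partition computations and the caterpillar bookkeeping are routine.
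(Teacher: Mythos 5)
Your proof is correct and follows essentially the same route as the paper's: handle the caterpillar case via Theorem~\ref{caterpillar2}, pass to internal $3$-colourings via Proposition~\ref{cla:internal}, observe that ${\rm Int}(\gamma')$ is ${\rm Int}(\gamma)$ (with parts relabelled) plus a single new vertex whose neighbourhood is a $2$-clique, and use Lemma~\ref{completion} to rule out completion edges at that vertex and so transfer uniqueness of the minimal restricted chordal completion in both directions. Your write-up is marginally cleaner in two respects: it avoids the paper's case split on whether the subtrees $P$ and $Q$ are singletons by describing the new vertex's neighbours abstractly as the parts containing $x$, and it makes explicit the surjectivity check (that the restriction of $\gamma'$ to $E^0(T)$ still uses all three colours) which the paper leaves implicit in its ``same argument but in reverse'' remark.
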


\begin{proof}
Let $(x, y)$ be a cherry of $T$. We may assume that $T'$ is obtained from $T$ as shown in Figure~\ref{modifications}(a). Thus $(x, z)$ is a cherry of $T'$ and the leaf set of $T'$ is $X\cup \{z\}$. By Theorem~\ref{caterpillar2}, the lemma holds if $T$ is a caterpillar as $T'$ is also a caterpillar. So we may assume that $T$ has an internal vertex incident with three internal edges and $|X|\ge 6$. There are two cases to consider depending on whether (i) exactly one of $P$ and $Q$ has size one and (ii) $|P|, |Q|\ge 2$. We will establish the lemma for (ii). The proof for (i) is similar and omitted.

Suppose that $|P|, |Q|\ge 2$, and $T$ is defined by three characters. Then, by Proposition~\ref{cla:internal}, $T$ is defined by an internal $3$-colouring $\gamma$. Let $\{c_1, c_2, c_3\}$ be the codomain of $\gamma$ and let $\gamma'$ be an internal $3$-colouring of $T'$ that extends $\gamma$, that is, $\gamma'$ has the property that if \blue{$e\in E^0(T)$}, then $\gamma'(e)=\gamma(e)$. Since $|P|, |Q|\ge 2$, we may assume without loss of generality that
$$(c_1, \{x, y\}),\, (c_2, \{x, y\}\cup P'),\, (c_3, \{x, y\}\cup Q')$$
are vertices of ${\rm Int}(\gamma)$, where $P'$ and $Q'$ are non-empty subsets of $P$ and $Q$, respectively, while
$$(c_1, \{x, y, z\}),\, (c_2, \{x, z\}),\, (c_2, \{y\}\cup P'),\, (c_3, \{x, y, z\}\cup Q')$$
are vertices of ${\rm Int}(\gamma')$. The partition intersection graphs of $\Pi(\gamma)$ and $\Pi(\gamma')$ are illustrated in Figure~\ref{intersection1}. It is now easily checked that ${\rm Int}(\gamma')$ can be constructed from ${\rm Int}(\gamma)$ by relabelling $(c_1, \{x, y\})$ as $(c_1, \{x, y, z\})$, $(c_2, \{x, y\}\cup P')$ as $(c_2, \{y\}\cup P')$, and $(c_3, \{x, y\}\cup Q')$ as $(c_3, \{x, y, z\}\cup Q')$, and adding a new vertex $(c_2, \{x, z\})$ adjacent to precisely $(c_1, \{x, y, z\})$ and $(c_3, \{x, y, z\}\cup Q')$.

\begin{figure}
\center
\input{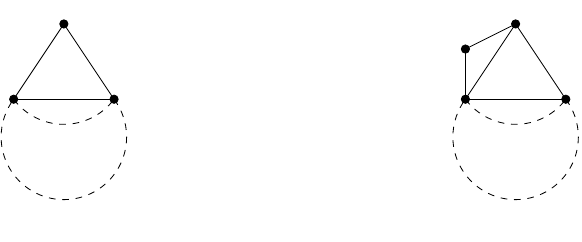_t}
\caption{Illustrations of (a) ${\rm Int}(\gamma)$ and (b) ${\rm Int}(\gamma')$ in the proof of Lemma~\ref{fork}, where $H$ and $H'$ represent the parts of ${\rm Int}(\gamma)$ and ${\rm Int}(\gamma')$ not explicitly shown.}
\label{intersection1}
\end{figure}

Since $\Pi(\gamma')$ is convex on $T'$, the partition intersection graph ${\rm Int}(\gamma')$ has a restricted chordal completion \blue{by Theorem~\ref{thm:key-help1}}. If $G'$ is a minimal restricted chordal completion of ${\rm Int}(\gamma')$, then, by Lemma~\ref{completion}, no completion edge of $G'$ is incident with $(c_2, \{x, z\})$. Therefore if ${\rm Int}(\gamma')$ has two distinct minimal restricted chordal completions, then, by the above construction, ${\rm Int}(\gamma)$ has two distinct minimal restricted chordal completions, a contradiction as $\Pi(\gamma)$ defines $T$. Thus ${\rm Int}(\gamma')$ has a unique minimal restricted chordal completion. Since $\Pi(\gamma')$ distinguishes $T'$, it follows that $\gamma'$ defines $T'$, and so $T'$ is defined by three characters. The proof that if $T'$ is defined by three characters, then $T$ is defined by three characters is the same argument but in reverse. This completes the proof of the lemma.
\end{proof}

In contrast to a fork modification, a balanced modification does not necessarily preserve the property of being defined by a set of at most three characters. However, it does preserve the property of not being defined by a set of at most three characters.

\begin{lemma}
Let $T$ be a binary phylogenetic $X$-tree, where $|X|\ge 3$, and let $T'$ be a binary phylogenetic tree obtained from $T$ by a balanced modification. If $T$ is not defined by a set of at most three characters, then $T'$ is not defined by a set of at most three characters.
\label{balanced}
\end{lemma}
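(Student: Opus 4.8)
The plan is to prove the contrapositive: supposing $T'$ is defined by a set of at most three characters, I will show the same for $T$. Let $p$ be the vertex of $T$ adjacent to the cherry leaves $x,y$, let $q$ be the third neighbour of $p$, and let $T'$ arise by subdividing $\{p,x\}$ and $\{p,y\}$ at new vertices $p_1,p_2$ and attaching $z$ to $p_1$ and $w$ to $p_2$; thus in $T'$ the vertex $p$ is incident with the three internal edges $\{p,p_1\}$, $\{p,p_2\}$, $\{p,q\}$. Since $T$ is not defined by at most three characters it is not a caterpillar (Theorem~\ref{caterpillar2}), so it has a vertex meeting three internal edges and $|X|\ge 6$; also $T'$ is not a caterpillar, as $p$ witnesses three internal edges. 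Hence, by Proposition~\ref{cla:internal}, $T'$ is defined by an internal $3$-colouring $\gamma'$. As $E^0(T')=E^0(T)\cup\{\{p,p_1\},\{p,p_2\}\}$, I would set $\gamma=\gamma'|_{E^0(T)}$ and check it is an internal $3$-colouring of $T$: at every vertex of $T$ other than $p$ the incident internal edges are exactly those in $T'$, while at $p$ only $\{p,q\}$ is internal, so $\gamma$ is proper; and $\gamma$ uses all three colours because $T$, being non-caterpillar, has a vertex $\ne p$ meeting three internal edges that $\gamma'$ colours distinctly.

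The main work is to transfer to ${\rm Int}(\gamma)$ the unique minimal restricted chordal completion that ${\rm Int}(\gamma')$ possesses by Theorem~\ref{thm:key-help}. Fixing colours so that $\{p,q\}$ has colour $c_1$ and $\{p,p_1\},\{p,p_2\}$ have colours $c_2,c_3$, a routine computation of the characters $\pi(\gamma,c_i)$ and $\pi(\gamma',c_i)$ shows that passing from $\gamma$ to $\gamma'$ leaves every part avoiding $\{x,y\}$ unchanged, enlarges the part $\{x,y\}$ of $\pi(\gamma,c_1)$ to $\{x,y,z,w\}$, and splits each of the two $\{x,y\}$-containing parts for colours $c_2,c_3$ into a small cherry part (namely $\{x,z\}$ and $\{y,w\}$) and a large part. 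Thus ${\rm Int}(\gamma')$ is obtained from ${\rm Int}(\gamma)$ by relabelling three vertices and adjoining the two vertices $(c_2,\{x,z\})$ and $(c_3,\{y,w\})$, each adjacent to exactly two vertices that are themselves adjacent. These two new vertices are therefore simplicial, so by Lemma~\ref{completion} no completion edge of any minimal restricted chordal completion meets them; deleting them gives a graph $H'$ whose minimal restricted chordal completions biject with those of ${\rm Int}(\gamma')$.

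Comparing $H'$ with ${\rm Int}(\gamma)$, the two graphs coincide except possibly for the single edge $e$ joining the two large parts. In ${\rm Int}(\gamma)$ this edge is present because both large parts contain $x$ and $y$; in $H'$ the corresponding parts are $\{y,w\}\cup R_2$ and $\{x,z\}\cup R_3$, where $R_2,R_3$ are the leaf sets reached beyond $q$, and these meet precisely when $R_2\cap R_3\ne\emptyset$. If $q$ is adjacent to a leaf then $R_2\cap R_3\ne\emptyset$, so $H'\cong{\rm Int}(\gamma)$ and the unique minimal restricted chordal completion passes directly to ${\rm Int}(\gamma)$; hence $\gamma$ defines $T$ by Theorem~\ref{thm:key-help} (convexity and distinguishing being automatic for a colouring), and $\Pi(\gamma)$ is a defining set of three characters.

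The main obstacle is the remaining case, where $q$ meets three internal edges, so $R_2,R_3$ lie in different branches at $q$ and $R_2\cap R_3=\emptyset$; then $e\notin H'$ but $e\in{\rm Int}(\gamma)$, so ${\rm Int}(\gamma)=H'+e$. This is precisely where a balanced modification differs from the fork modification of Lemma~\ref{fork}, for which the analogous edge survived because one cherry leaf stayed in both parts. To finish I would analyse the common neighbour $(c_1,\{x,y,z,w\})$ of the endpoints of $e$, whose only neighbours in $H'$ are these two non-adjacent vertices. If it is not a cut vertex of $H'$, then an induced-cycle argument forces $e$ into every minimal restricted chordal completion of $H'$ (equivalently of ${\rm Int}(\gamma')$), so the unique such completion $G'^{*}$ contains $e$; then $G'^{*}$ is simultaneously a minimal restricted chordal completion of $H'+e={\rm Int}(\gamma)$, and uniqueness transfers since any restricted chordal completion of ${\rm Int}(\gamma)$ is also one of $H'$ and hence contains $G'^{*}$. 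If instead $(c_1,\{x,y,z,w\})$ is a cut vertex, then in ${\rm Int}(\gamma)$ it lies in the triangle on $e$, giving two-element vertex-cuts whose vertices are joined by an edge, and the transfer of the unique minimal restricted chordal completion is handled exactly by the vertex-cut bookkeeping used in the proof of Proposition~\ref{cla:internal}. I expect this dichotomy, and in particular ruling out that chordality around $(c_1,\{x,y,z,w\})$ is repaired by completion edges running into the rest of the graph rather than by $e$, to be the crux of the proof.
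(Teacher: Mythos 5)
Your skeleton matches the paper's proof: you argue the contrapositive, invoke Proposition~\ref{cla:internal} to get an internal $3$-colouring $\gamma'$ of $T'$, restrict it to an internal $3$-colouring $\gamma$ of $T$, and correctly work out that ${\rm Int}(\gamma)$ is obtained from ${\rm Int}(\gamma')$ by deleting the two simplicial vertices $(c_2,\{x,z\})$ and $(c_3,\{y,w\})$, relabelling, and (when $q$ meets three internal edges) adding the edge $e$ joining the two large parts; your easy case, where $q$ is adjacent to a leaf, is handled correctly and is exactly the case the paper omits as routine. The genuine gap is in the crux case, and it is the one you flag yourself: your claim that, when $(c_1,\{x,y,z,w\})$ is not a cut vertex of $H'$, an induced-cycle argument forces $e$ into every minimal restricted chordal completion does not hold as stated. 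An induced cycle through $(c_1,\{x,y,z,w\})$, whose only two neighbours in $H'$ are the endpoints of $e$, can be triangulated without $e$ by fanning completion edges from $(c_1,\{x,y,z,w\})$ to other cycle vertices whose first components are $c_2$ or $c_3$; such edges are legitimate completion edges, and Lemma~\ref{completion} does not exclude them, precisely because in this case $(c_1,\{x,y,z,w\})$ does lie on induced cycles with at least four vertices. Your cut-vertex sub-case is likewise only a pointer to the bookkeeping in Proposition~\ref{cla:internal}, not an argument, so the central step of the proof is missing.

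The paper sidesteps the forcing problem entirely using one fact you never exploit: $\Pi(\gamma)$ is \emph{automatically} convex on $T$ (every colouring-induced set of characters is convex on its tree), so by Theorem~\ref{thm:key-help1} the graph ${\rm Int}(\gamma)$ has a restricted chordal completion, and hence a minimal one, $G$. Transferring the completion edges of $G$ across the relabelling and adding $e$ as a completion edge yields a minimal restricted chordal completion $G'_1$ of ${\rm Int}(\gamma')$: the resulting graph is $G$ plus two simplicial vertices, so chordality and minimality carry over. Since $\gamma'$ defines $T'$, the minimal restricted chordal completion of ${\rm Int}(\gamma')$ is unique by Theorem~\ref{thm:key-help}, so it equals $G'_1$ and in particular contains $e$ as a completion edge. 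This is an existence-plus-uniqueness pincer, not a forcing argument over all completions. Reversing the construction then shows that ${\rm Int}(\gamma)$ has a unique minimal restricted chordal completion, so $\gamma$ defines $T$, the desired contradiction. To repair your proof, drop the cut-vertex dichotomy and replace it with this step: manufacture one minimal completion of ${\rm Int}(\gamma')$ containing $e$ from a minimal completion of ${\rm Int}(\gamma)$, and let uniqueness do the rest.
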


\begin{proof}
Suppose that $T$ is not defined by at most three characters. Then $T$ is not a caterpillar and so $T$, and therefore $T'$, has an internal vertex incident with three internal edges and $|X|\ge 6$. Let $(x, y)$ be a cherry of $T$. We may assume that $T'$ is obtained from $T$ as shown in Figure~\ref{modifications}(b). Thus $(x, z)$ and $(w, y)$ are cherries of $T'$ and the leaf set of $T'$ is $X\cup \{w, z\}$ and, as $|X|\ge 6$, either $|P|\ge 2$ or $|Q|\ge 2$. There are two cases: (i) exactly one of $P$ and $Q$ has size one and (ii) $|P|, |Q|\ge 2$. We will establish the lemma for (ii). The proof of (i) is similar and omitted.

Suppose that $|P|, |Q|\ge 2$ and, \blue{for the purpose of obtaining a contradiction,} that $T'$ is defined by three characters. By Proposition~\ref{cla:internal}, $T'$ is defined by an internal $3$-colouring $\gamma'$. Let $\{c_1, c_2, c_3\}$ be the codomain of $\gamma'$ and let $\gamma$ be the restriction of $\gamma'$ to the edges of $T$. That is, \blue{for each $e\in E^0(T)$, we have} $\gamma(e)=\gamma'(e)$. Since $|P|, |Q|\ge 2$, we may assume without loss of generality that
$$(c_1, \{w, x, y, z\}),\, (c_2, \{x, z\}),\, (c_2, \{w, y\}\cup P'),\, (c_3, \{w, y\}),\, (c_3, \{x, z\}\cup Q')$$
are vertices of ${\rm Int}(\gamma')$, where $P'$ and $Q'$ are non-empty subsets of $P$ and $Q$, respectively, while
\blue{$$(c_1, \{x, y\}),\, (c_2, \{x, y\}\cup P'),\, (c_3, \{x, y\}\cup Q')$$}
are vertices of ${\rm Int}(\gamma)$. The partition intersection graphs of $\Pi(\gamma')$ and $\Pi(\gamma)$ are illustrated in Figure~\ref{intersection2}. A routine check shows that ${\rm Int}(\gamma)$ can be constructed from ${\rm Int}(\gamma')$ by deleting $(c_2, \{x, z\})$ and $(c_3, \{w, y\})$, relabelling $(c_2, \{w, y\}\cup P')$ as \blue{$(c_2, \{x, y\}\cup P')$} and $(c_3, \{x, z\}\cup Q')$ as \blue{$(c_3, \{x, y\}\cup Q')$}, and joining \blue{$(c_2, \{x, y\}\cup P')$} and \blue{$(c_3, \{x, y\}\cup Q')$} with an edge.

\begin{figure}
\center
\input{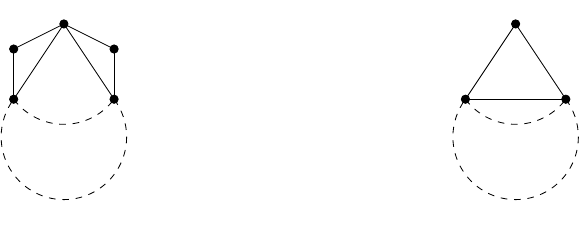_t}
\caption{Illustrations of (a) ${\rm Int}(\gamma')$ and (b) ${\rm Int}(\gamma)$ in the proof of Lemma~\ref{balanced}, where $H'$ and $H$ represent the parts of ${\rm Int}(\gamma')$ and ${\rm Int}(\gamma)$ not explicitly shown.}
\label{intersection2}
\end{figure}

Since $\gamma'$ defines $T'$, the partition intersection graph ${\rm Int}(\gamma')$ has a unique minimal restricted chordal completion $G'$. As ${\rm Int}(\gamma')$ has no vertex-induced cycles of size at least four containing either $(c_2, \{x, z\})$ or $(c_3, \{w, y\})$, it follows by Lemma~\ref{completion} that no completion edge of $G'$ is incident with these vertices. Furthermore,
$$\{(c_2, \{w, y\}\cup P'), (c_3, \{x, z\}\cup Q')\}$$
is a completion edge of $G'$. To see this, let $G$ be a minimal restricted chordal completion of ${\rm Int}(\gamma)$. \blue{Since $\gamma$ is convex on $T$, such a graph exists.} Let $E_1$ denote the set of completion edges of $G$ and let $E'_1$ denote the collection of $2$-element subsets of $V({\rm Int}(\gamma'))$ obtained from $E_1$ by replacing $(c_2, \{x, y\}\cup P')$ with $(c_2, \{w, y\}\cup P')$ and $(c_3, \{x, y\}\cup Q')$ with $(c_3, \{x, z\}\cup Q')$. Since $G$ is a minimal restricted chordal completion of ${\rm Int}(\gamma)$, it follows by the above construction that the graph obtained from ${\rm Int}(\gamma')$ by adding the edges in $E'_1$ as well as the edge $\{(c_2, \{w, y\}\cup P'), (c_3, \{x, z\}\cup Q')\}$ to ${\rm Int}(\gamma')$ is a minimal restricted chordal completion $G'_1$ of ${\rm Int}(\gamma')$. If $\{(c_2, \{w, y\}\cup P'), (c_3, \{x, z\}\cup Q')\}$ is not a completion edge of $G'$, then, as this edge is a completion edge of $G'_1$, it follows that ${\rm Int}(\gamma')$ has two distinct minimal restricted chordal completions. This is a contradiction as $\gamma'$ defines $T'$. It now follows that, as ${\rm Int}(\gamma')$ has a unique minimal restricted chordal completion, ${\rm Int}(\gamma)$ has a unique minimal restricted chordal completion. Since $\Pi(\gamma)$ is convex on $T$ and distinguishes $T$, we deduce that $\gamma$ defines $T$, and so $T$ is defined by three characters. This last contradiction completes the proof of the lemma.
\end{proof}

The next theorem is the necessary direction of Theorem~\ref{thm:main}.

\begin{theorem}
Let $T$ be a binary phylogenetic $X$-tree, and suppose that $T$ has an internal subtree isomorphic to the snowflake. Then, up to isomorphism, $T$ can be obtained from the phylogenetic tree in Figure~\ref{snowflake}(a) by a sequence of cherry modifications. In particular, $T$ is not defined by a set of at most three characters.
\label{sequence1}
\end{theorem}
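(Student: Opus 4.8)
The plan is to prove the first assertion by induction on $|X|$, after recording the structural consequences of $T$ having an internal subtree $S$ isomorphic to the snowflake, and then to deduce the second assertion from the earlier lemmas. Write $T_0$ for the tree in Figure~\ref{snowflake}(a); the key structural observation is that $T_0$ is precisely the snowflake with a cherry grafted at each of its six leaves. Now let $S$ be an internal subtree of $T$ isomorphic to the snowflake. Since every edge of $S$ is an internal edge of $T$, each of the six leaves of $S$ is an internal (degree-three) vertex of $T$, say $\ell_1,\dots,\ell_6$, and at each $\ell_i$ exactly one incident edge lies in $S$ while the other two point outwards. Deleting the edges of $S$ from $T$ therefore leaves the four internal vertices of the snowflake isolated, together with six pairwise vertex-disjoint subtrees $W_1,\dots,W_6$, where $W_i$ consists of $\ell_i$ and its two out-subtrees. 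Every leaf of $T$ lies in exactly one $W_i$ and each $W_i$ has at least two leaves, so $|X|=\sum_i|W_i|\ge 12$, with equality if and only if every $W_i$ is a single cherry, i.e. $T\cong T_0$. This settles the base case $|X|=12$.

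For the inductive step ($|X|>12$) the idea is to locate a cherry modification that can be undone without disturbing $S$. Since $\sum_i|W_i|=|X|>12$, some $W_i$ has at least three leaves; regard $W_i$ as rooted at $\ell_i$ and choose a cherry $(x,y)$ of $T$ inside $W_i$ whose apex $c$ has maximum depth (distance from $\ell_i$), noting that $c\neq\ell_i$ since $|W_i|\ge 3$. Let $d$ be the parent of $c$ and let $c'$ be the other child of $d$. If $c'$ is a leaf $z$, then $T$ is a fork modification of the tree $T^-$ obtained by deleting $y$ and suppressing $c$; here $T^-$ has the cherry $(x,z)$ with apex $d$, and the fork modification re-inserts $c$ together with the new leaf $y$. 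If instead $c'$ is internal, then by maximality of the depth of $c$ the vertex $c'$ must itself be a cherry apex (otherwise there would be a cherry strictly below $c'$, hence deeper than $c$), so $(x,y)$ and the cherry at $c'$ are sibling cherries sharing the common vertex $d$, and $T$ is a balanced modification of the tree $T^-$ obtained by deleting one leaf of each of these two cherries and suppressing $c$ and $c'$. In either case $T^-$ has fewer leaves than $T$, the modification takes place strictly inside $W_i$ so that $S$ survives as an internal subtree of $T^-$, and $\ell_i$ retains out-degree two because $|W_i^-|\ge 2$. Thus $T^-$ satisfies the induction hypothesis, so it is obtained from $T_0$ by a sequence of cherry modifications, and appending the final fork or balanced modification shows the same for $T$.

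The ``in particular'' assertion then follows by tracking the property along this sequence. By Lemma~\ref{obs:twelve}, $T_0$ is not defined by a set of at most three characters. By Lemma~\ref{fork} a fork modification neither creates nor destroys this property, and by Lemma~\ref{balanced} a balanced modification preserves the failure of this property. Hence, starting from $T_0$ and applying the modifications in order, every tree in the sequence fails to be defined by a set of at most three characters; in particular so does $T$.

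I expect the inductive step to be the main obstacle, specifically the need to exhibit, in any $T$ with $|X|>12$, an \emph{applicable} fork or balanced modification whose reversal leaves $S$ untouched. The depth-maximality choice of the cherry $(x,y)$ is what makes the case analysis clean: it excludes the awkward possibility that $c'$ is internal but not a cherry apex, reducing everything to the leaf case (fork) and the sibling-cherry case (balanced). Extra care is needed in the boundary situation $d=\ell_i$, where the modification happens right at the snowflake leaf; there one must verify that $\ell_i$ keeps both its $S$-edge and its out-degree two, so that the snowflake genuinely persists as an internal subtree of $T^-$.
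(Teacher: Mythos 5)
Your proof is correct and takes essentially the same route as the paper's: induction on $|X|$ with Lemma~\ref{obs:twelve} as the base case, an inductive step that selects a deepest cherry outside the snowflake and undoes a fork or balanced modification according to whether the cherry's sibling is a leaf or another cherry (the paper makes the same choice by maximising distance from the snowflake's centre $v$ rather than depth within $W_i$), and a final appeal to Lemmas~\ref{fork} and~\ref{balanced}. Your write-up is somewhat more explicit than the paper's in the base case and in the boundary situation $d=\ell_i$, but the underlying argument is identical.
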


\begin{proof}
Since $T$ has an internal subtree isomorphic to the snowflake, $|X|\ge 12$. The proof is by induction on $n=|X|$. If $n=12$, then $T$ is isomorphic to the phylogenetic tree shown in Figure~\ref{snowflake}(a) and so, by Lemma~\ref{obs:twelve}, the \blue{theorem} holds.

Now suppose that the \blue{theorem} holds for all binary phylogenetic trees with at most $n-1\ge 12$ leaves. Let $S$ be an internal subtree of $T$ isomorphic to the snowflake, and let $v$ be the unique vertex of $S$ that is at distance two from each of its leaves. Now let $u_1$ be an internal vertex of $T$ at maximum distance from $v$. Observe that $u_1$ is adjacent to two leaves, say $x$ and $z$. Let $u$ denote the unique internal vertex of $T$ adjacent to $u_1$, and let $u_2$ denote the vertex of $T$ adjacent to $u$, that is not $u_1$, but at the same distance from $v$ as $u_1$. If $u_2$ is a leaf, label this vertex $y$. Otherwise, $u_2$ is adjacent to two leaves, say $y$ and $w$.

Let $T'$ be the binary phylogenetic tree such that $T$ is obtained from $T'$ by \blue{applying} either a fork modification to the cherry $(x, y)$ or a balanced modification to the cherry $(x, y)$. Since $n\ge 13$, it follows by the choice of $u_1$ that neither $u_1$ nor $u_2$ are vertices in $S$. Thus $T'$ has an internal subtree isomorphic to the snowflake. Therefore, by induction, $T'$ can be obtained from the binary phylogenetic tree shown in Figure~\ref{snowflake}(a) by a sequence of cherry modifications and, moreover, $T'$ is not defined by three characters. Theorem~\ref{sequence1} now follows by Lemmas~\ref{fork} and~\ref{balanced}.
\end{proof}

\section{Proof of Theorem~\ref{thm:main}: Sufficient Direction}
\label{sufficient}

In this section, we complete the proof of Theorem~\ref{thm:main} by proving the sufficient direction. To do this, we first introduce the operation of cherry union. Let $T_1$ and $T_2$ be two binary phylogenetic trees with leaf sets $X_1$ and $X_2$, respectively, such that $|X_1|, |X_2|\ge 4$ and $X_1\cap X_2=\{x\}$. In addition, suppose that $T_1$ has a cherry $(x, y_1)$ and $T_2$ has a cherry $(x, y_2)$. Let $T$ be the binary phylogenetic tree with leaf set $(X_1\cup X_2)-\{y_1, y_2\}$ that is obtained from $T_1$ and $T_2$ by deleting the leaf $x$ in exactly one of $T_1$ and $T_2$, identifying the vertices $y_1$ and $y_2$, and suppressing the two resulting degree-$2$ vertices. We say that $T$ is the {\em cherry union} of $T_1$ and $T_2$, and denote $T$ by $T_1\Box T_2$. An illustration of a cherry union is shown in Figure~\ref{amalgam1}.

\begin{figure}
\center
\input{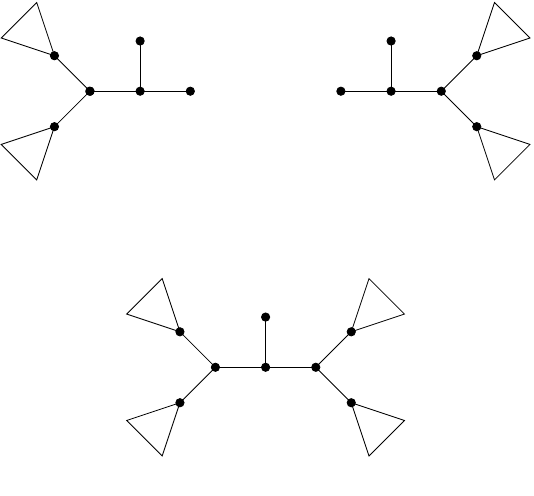_t}
\caption{The cherry union $T$ of two binary phylogenetic trees $T_1$ and $T_2$, where the leaf set of $T_1$ is $P_1\cup Q_1\cup\{x, y_1\}$, the leaf set of $T_2$ is $P_2\cup Q_2\cup \{x, y_2\}$, and the leaf set of $T$ is $P_1\cup Q_1\cup P_2\cup Q_2\cup \{x\}$.}
\label{amalgam1}
\end{figure}

The next lemma shows that a cherry union preserves the property of being defined by a set of at most three characters. As in the last section, if $\gamma$ is an internal $k$-colouring of a binary phylogenetic tree, we write the vertices of ${\rm Int}(\gamma)$ in the form $(c, A)$ instead of $(\pi(\gamma, c), A)$.

\begin{lemma}
Let $T_1$ and $T_2$ be two binary phylogenetic trees with leaf sets $X_1$ and $X_2$, respectively, where $|X_1|, |X_2|\ge 4$. Suppose that $X_1\cap X_2=\{x\}$, and $(x, y_1)$ and $(x, y_2)$ are cherries in $T_1$ and $T_2$, respectively. If $T_1$ and $T_2$ are each defined by a set of at most three characters, then $T_1\Box T_2$ is defined by a set of at most three characters.
\label{amalgam2}
\end{lemma}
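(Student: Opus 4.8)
The plan is to build an internal $3$-colouring $\gamma$ of $T=T_1\Box T_2$ out of defining colourings of $T_1$ and $T_2$, and then to argue that ${\rm Int}(\gamma)$ inherits a \emph{unique} minimal restricted chordal completion from those of its two pieces, so that Theorem~\ref{thm:key-help} applies. First I would clear away the caterpillar case: if $T$ is a caterpillar it is defined by at most two characters by Theorem~\ref{caterpillar2}, so assume it is not. Since the cherry union of two caterpillars is again a caterpillar, at least one of $T_1,T_2$ is not a caterpillar, and since every binary phylogenetic tree on at most five leaves is a caterpillar, $T$ has at least six leaves; by Proposition~\ref{cla:internal} it then suffices to find an internal $3$-colouring that defines $T$. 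Recall (Figure~\ref{amalgam1}) that $T$ has a vertex $u$ adjacent to the leaf $x$ and to internal vertices $v_1,v_2$, where $v_i$ carries the non-cherry part of $T_i$ and the edge $\{u_i,v_i\}$ of $T_i$ becomes $e_i=\{u,v_i\}$. Using Propositions~\ref{cla:internal} and~\ref{thm:caterpillar} I would take colourings $\gamma_1,\gamma_2$ defining $T_1,T_2$, relabel colours so that $\gamma_1(\{u_1,v_1\})=1$ and $\gamma_2(\{u_2,v_2\})=2$, and define $\gamma$ to agree with $\gamma_i$ inside the $T_i$-part and to colour $e_1,e_2$ by $1,2$. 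The only new incidence is at $u$, where $e_1,e_2$ get distinct colours, so $\gamma$ is an internal $3$-colouring and $\Pi(\gamma)$ is automatically convex on and distinguishes $T$.

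The crux is a structural description of ${\rm Int}(\gamma)$. For each colour $c$ the part of $\pi(\gamma,c)$ containing $x$ is found by leaving $u$ along the edges of colour $\neq c$: for $c=1$ it enters the $T_2$-side and equals the cherry part $R_2^1$ of $\pi(\gamma_2,1)$ with $y_2$ removed; for $c=2$ it enters the $T_1$-side and equals the cherry part $R_1^2$ of $\pi(\gamma_1,2)$ with $y_1$ removed; and for $c=3$, which colours neither $e_1$ nor $e_2$, it enters both sides and equals $\{x\}\cup(R_1^3\setminus\{x,y_1\})\cup(R_2^3\setminus\{x,y_2\})$. These three parts all contain $x$, so their vertices form a triangle $\Delta$ in ${\rm Int}(\gamma)$; every remaining part of $\pi(\gamma,c)$ lies inside $P_1\cup Q_1$ or inside $P_2\cup Q_2$ and is a non-cherry part of $\pi(\gamma_i,c)$. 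I would then show that, under the bijection carrying the three cherry vertices of ${\rm Int}(\gamma_i)$ to $\Delta$, the subgraph of ${\rm Int}(\gamma)$ induced on the $T_i$-side together with $\Delta$ is isomorphic to ${\rm Int}(\gamma_i)$; equivalently, ${\rm Int}(\gamma)$ is the clique-sum of ${\rm Int}(\gamma_1)$ and ${\rm Int}(\gamma_2)$ along the triangle $\Delta$.

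Granting this, the completion argument is short. A pure $T_1$-part and a pure $T_2$-part are disjoint, and the only vertices meeting both sides lie in $\Delta$, so $V(\Delta)$ is a vertex-cut separating the two sides. As $\Delta$ is a clique, an induced cycle on at least four vertices can contain at most two, necessarily consecutive, vertices of $\Delta$, hence cannot meet both sides; by Lemma~\ref{completion} no completion edge of a minimal restricted chordal completion $G$ of ${\rm Int}(\gamma)$ crosses the cut. Thus $G$ restricts on each side to a restricted chordal completion of ${\rm Int}(\gamma_i)$, which is minimal (an edge inside one side is removable exactly when it is removable in the clique-sum), and conversely the clique-sum of two chordal side-completions along the clique $\Delta$ is chordal. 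Since each $\gamma_i$ defines $T_i$, ${\rm Int}(\gamma_i)$ has a unique minimal restricted chordal completion, and since $G$ is the union of its two restrictions, ${\rm Int}(\gamma)$ has a unique one as well. Theorem~\ref{thm:key-help} then gives that $\Pi(\gamma)$ defines $T$.

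The hard part will be the degenerate configuration in which one side, say $T_1$, is a caterpillar, so that $\gamma_1$ uses only colours $1$ and $2$ and colour $3$ is absent on the $T_1$-side. Then the colour-$3$ part containing $x$ absorbs all of $P_1\cup Q_1$, and its vertex becomes adjacent to every pure $T_1$-vertex. Consequently the $T_1$-side together with $\Delta$ is no longer a copy of ${\rm Int}(\gamma_1)$ but rather ${\rm Int}(\gamma_1)$---a path, by the proof of Theorem~\ref{caterpillar2}---augmented by a vertex dominating it; this is a fan, hence chordal, so it contributes no completion edges and its minimal restricted chordal completion is trivially unique. The separation argument via Lemma~\ref{completion} is unaffected, and uniqueness for ${\rm Int}(\gamma)$ again follows from that for the non-caterpillar side ${\rm Int}(\gamma_2)$. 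Verifying these bookkeeping details in the degenerate cases, rather than the clique-sum idea itself, is where I expect the real work to lie.
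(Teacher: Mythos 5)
Your proposal is correct, and in the main case it coincides with the paper's own proof: when neither $T_1$ nor $T_2$ is a caterpillar, the paper likewise takes defining internal $3$-colourings $\gamma_1,\gamma_2$ (Proposition~\ref{cla:internal}), recolours so that the edges entering the two cherries receive distinct colours, observes that ${\rm Int}(\gamma)$ is obtained from ${\rm Int}(\gamma_1)$ and ${\rm Int}(\gamma_2)$ by identifying the three cherry vertices (your clique-sum along $\Delta$), and transfers uniqueness of the minimal restricted chordal completion via Lemma~\ref{completion} together with the observation that every induced cycle on at least four vertices lies on one side of the triangle. Where you genuinely diverge is the degenerate case in which one tree, say $T_2$, is a caterpillar: the paper sidesteps the colouring machinery there entirely, noting that $T_1\Box T_2$ is then obtained from $T_1$ by a sequence of fork modifications, so Lemma~\ref{fork} applies at once; you instead push the clique-sum argument through, giving the caterpillar side a defining internal $2$-colouring (Proposition~\ref{thm:caterpillar}) and showing its side of ${\rm Int}(\gamma)$ is a fan, hence chordal, so that all long induced cycles---and, by Lemma~\ref{completion}, all completion edges---live on the non-caterpillar side. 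Your route is sound but heavier, and it leaves one boundary sub-case untreated: Proposition~\ref{thm:caterpillar} requires at least five leaves, so a caterpillar side with exactly four leaves (a single internal edge, i.e.\ an internal $1$-colouring) needs a separate, though equally easy, check, which the paper's fork-modification argument covers for free. Conversely, your abstract description of the three glued parts (the cherry part of $\pi(\gamma_i,c)$ with $y_i$ removed) is uniform and quietly absorbs the case split that the paper performs into (i)--(iii) according to whether $|Q_1|$ or $|Q_2|$ is a singleton.
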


\begin{proof}
We may assume that $T_1$, $T_2$, and $T_1\Box T_2$ are as shown in Figure~\ref{amalgam1}. Suppose that $T_1$ and $T_2$ are each defined by a set of at most three characters. If one of $T_1$ and $T_2$, say $T_2$ is a caterpillar, then it is easily seen that, up to leaf labels, $T_1\Box T_2$ is obtained from $T_1$ by a sequence of fork modifications and so, by Lemma~\ref{fork}, $T_1\Box T_2$ is defined by most three characters. Thus we may assume that neither $T_1$ nor $T_2$ is a caterpillar. By Proposition~\ref{cla:internal}, $T_1$ and $T_2$ are defined by internal $3$-colourings $\gamma_1$ and $\gamma_2$, respectively. Without loss of generality, we may assume that the codomain of $\gamma_1$ and the codomain of $\gamma_2$ is $\{c_1, c_2, c_3\}$. Moreover, by recolouring if necessary, we may assume that if $u_1$ and $u_2$ denote the vertices of $T_1$ and $T_2$ adjacent to $x$ and $y_1$, and adjacent to $x$ and $y_2$, respectively, then the (unique) internal edges of $T_1$ and $T_2$ incident with $u_1$ and $u_2$ are assigned different colours.

Since neither $T_1$ nor $T_2$ is a caterpillar, $|P_1|, |P_2|\ge 2$. There are three cases to consider: (i) $|Q_1|=1=|Q_2|$, (ii) exactly one of $Q_1$ and $Q_2$ has size one, and (iii) $|Q_1|, |Q_2|\ge 2$. We establish the lemma for (iii). The proofs of (i) and (ii) are similar and omitted.

Suppose that (iii) holds. Then, without loss of generality, we may assume that $(c_1, \{x, y_1\})$, $(c_2, \{x, y_1\}\cup P'_1)$, and $(c_3, \{x, y_1\}\cup Q'_1)$ are vertices of ${\rm Int}(\gamma_1)$, where $P'_1$ and $Q'_1$ are non-empty subsets of $P_1$ and $Q_1$, respectively. Similarly, $(c_2, \{x, y_2\})$, $(c_1, \{x, y_2\}\cup P'_2)$, and $(c_3, \{x, y_2\}\cup Q'_2)$ are vertices of ${\rm Int}(\gamma_2)$, where $P'_2$ and $Q'_2$ are non-empty subsets of $P_2$ and $Q_2$, respectively. Illustrations of ${\rm Int}(\gamma_1)$ and ${\rm Int}(\gamma_2)$ are shown in Figure~\ref{intersection3}. Let $\gamma$ be the internal $3$-colouring of $T=T_1\Box T_2$ induced by $\gamma_1$ and $\gamma_2$. Thus, in reference to Figure~\ref{amalgam1}, $\{u, v_1\}$ and $\{u''_2, v_2\}$ are coloured $c_1$, $\{u''_1, v_1\}$ and $\{u, v_2\}$ are coloured $c_2$, and $\{u'_1, v_1\}$ and $\{u'_2, v_2\}$ are coloured $c_3$. Since $\gamma$ is an internal $3$-colouring of $T$, it follows that $\Pi(\gamma)$ is convex on $T$ and distinguishes $T$.

\begin{figure}
\center
\input{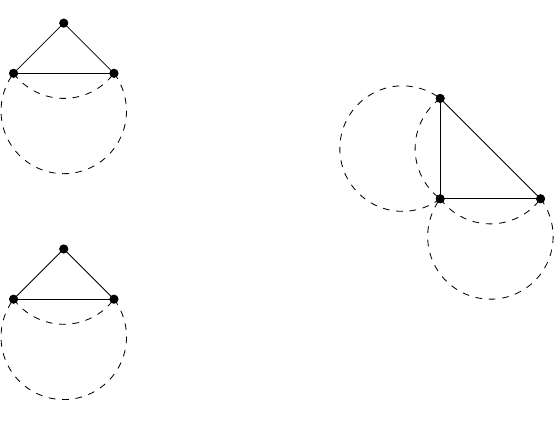_t}
\caption{Illustrations of (a) ${\rm Int}(\gamma_1)$, (b) ${\rm Int}(\gamma_2)$, and (c) ${\rm Int}(\gamma)$ in the proof of Lemma~\ref{amalgam2}, where $H_1$ and $H_2$ represent the parts of ${\rm Int}(\gamma_1)$ and ${\rm Int}(\gamma_2)$ not explicitly shown.}
\label{intersection3}
\end{figure}

Now consider ${\rm Int}(\gamma)$, and observe that $(c_1, \{x\}\cup P'_2)$, $(c_2, \{x\}\cup P'_1)$, and $(c_3, \{x\}\cup Q'_1\cup Q'_2)$ are vertices of ${\rm Int}(\gamma)$. Also, as $\Pi(\gamma)$ is convex on $T$, we have in addition to these three vertices, for all $i\in \{1, 2, 3\}$, that $(c_i, D)$ is a vertex of ${\rm Int}(\gamma)$ if and only if either $(c_i, D)$ is a vertex of ${\rm Int}(\gamma_1)$ or $(c_i, D)$ is a vertex of ${\rm Int}(\gamma_2)$. It is now easily checked that ${\rm Int}(\gamma)$ can be constructed from ${\rm Int}(\gamma_1)$ and ${\rm Int}(\gamma_2)$ by identifying the vertices $(c_1, \{x, y_1\})$ and $(c_1, \{x, y_2\}\cup P'_2)$, $(c_2, \{x, y_1\}\cup P'_1)$ and $(c_2, \{x, y_2\})$, and $(c_3, \{x, y_1\}\cup Q'_1)$ and $(c_3, \{x, y_2\}\cup Q'_2)$ together with the corresponding edges, and then relabelling the identified vertices as $(c_1, \{x\}\cup P'_2)$, $(c_2, \{x\}\cup P'_1)$, and $(c_3, \{x\}\cup Q'_1\cup Q'_2)$, respectively. An illustration of ${\rm Int}(\gamma)$ is shown in Figure~\ref{intersection3}.

Let $G$ be a minimal restricted chordal completion of ${\rm Int}(\gamma)$. Since $(c_1, \{x\}\cup P'_2)$, $(c_2, \{x\}\cup P'_1)$, and $(c_3, \{x\}\cup Q'_1\cup Q'_2)$ is a $3$-clique of ${\rm Int}(\gamma)$, it follows by the above construction that if $C$ is a vertex-induced cycle of ${\rm Int}(\gamma)$ with at least four vertices, then, modulo replacing $(c_2, \{x\}\cup P'_1)$ with $(c_2, \{x, y_1\}\cup P'_1)$ and $(c_3, \{x\}\cup Q'_1\cup Q'_2)$ with $(c_3, \{x, y_1\}\cup Q'_1)$, or \blue{$(c_1, \{x\}\cup P'_2)$} with $(c_1, \{x, y_2\}\cup P'_2)$ and $(c_3, \{x\}\cup Q'_1\cup Q'_2)$ with $(c_3, \{x, y_2\}\cup Q'_2)$, the cycle $C$ is a vertex-induced cycle of either ${\rm Int}(\gamma_1)$ or ${\rm Int}(\gamma_2)$ with at least four vertices. Thus, by Lemma~\ref{completion} we deduce that if ${\rm Int}(\gamma)$ has two minimal restricted chordal completions, then either ${\rm Int}(\gamma_1)$ or ${\rm Int}(\gamma_2)$ has two minimal restricted chordal completions, contradicting that $\gamma_1$ and $\gamma_2$ define $T_1$ and $T_2$, respectively. Hence $G$ is the unique minimal restricted chordal completion of ${\rm Int}(\gamma)$. Since $\gamma$ is convex on $T$ and distinguishes $T$, the internal $3$-colouring $\gamma$ defines $T$. This completes the proof of the lemma.
\end{proof}

A binary phylogenetic $X$-tree $T$ is a {\em cherried caterpillar} if either $|X|=4$ or $T$ can be obtained from a caterpillar by replacing each leaf with a pair of leaves in a cherry, that is, for each pendant edge of a caterpillar, subdividing it and adjoining a new leaf by adding an edge joining the new leaf and the subdivision vertex. A cherried caterpillar is illustrated in Figure~\ref{orchard1}.

\begin{figure}
\center
\input{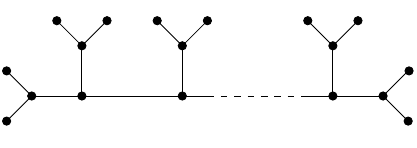_t}
\caption{A cherried caterpillar with leaf set $[n]$, where $n\ge 4$.}
\label{orchard1}
\end{figure}

\begin{lemma}
Let $T$ be a cherried caterpillar. Then $T$ is defined by a set of at most three characters.
\label{orchard2}
\end{lemma}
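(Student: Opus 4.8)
The plan is to produce, for each cherried caterpillar $T$ with leaf set $X$, an explicit set of at most three characters that defines it, and to verify this via Theorem~\ref{thm:key-help}. If $|X|=4$ then $T$ is a caterpillar and the claim follows from Theorem~\ref{caterpillar2}, so assume $|X|=2m$ with $m\ge 3$ and adopt the labelling of Figure~\ref{orchard1}, with cherries $(1,2),(3,4),\dots,(2m-1,2m)$ attached to a spine $s_1,\dots,s_{m-2}$. Then $T$ has an internal vertex incident with three internal edges, so $T$ is not a caterpillar, and since $|X|\ge 6$ Proposition~\ref{cla:internal} applies: it suffices to find an internal $3$-colouring $\gamma$ that defines $T$. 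The set $\Pi(\gamma)$ is automatically convex on $T$ and distinguishes $T$, so the only remaining task is to show that ${\rm Int}(\gamma)$ has a unique minimal restricted chordal completion.

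First I would fix a concrete $\gamma$: colour the spine edges alternately $1,2,1,2,\dots$; this forces every interior apex edge to take colour $3$, and at each end of the spine I colour the two apex edges with the two colours absent from the incident spine edge. A direct check shows $\gamma$ is a proper colouring of the internal edges using all three colours. Deleting each colour class and reading off the resulting blocks gives $\Pi(\gamma)=\{\chi_1,\chi_2,\chi_3\}$; since colour $3$ never lies on a spine edge, one character has a single large block running the length of the spine, while the other two cut the spine at various edges and isolate several cherries. The partition intersection graph ${\rm Int}(\gamma)$ is thus a tripartite graph whose three parts are the vertex sets $\{(\chi_i,A):A\in\chi_i\}$, and a restricted completion may add an edge only between distinct parts.

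The core of the proof is the uniqueness of the minimal restricted chordal completion of ${\rm Int}(\gamma)$, and here I would combine two facts. By Lemma~\ref{completion}, every completion edge is a chord of a vertex-induced cycle of ${\rm Int}(\gamma)$ of length at least four; and by tripartiteness, a restricted completion cannot join two vertices of the same character. Because $\Pi(\gamma)$ is convex on $T$, Theorem~\ref{thm:key-help1} guarantees that a restricted chordal completion exists, which already forbids any induced $\ge 4$-cycle lying in the union of just two parts (both of its diagonals would then join vertices of a single character and so be inadmissible, leaving the cycle impossible to triangulate). Hence every induced $\ge 4$-cycle meets all three parts, and its inadmissible diagonals are exactly those joining two blocks of one character; for the colouring above these are pairs of spine-segment blocks of a spine-cutting character, while the spanning block of the third character remains available as a legal apex for triangulation. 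This colour constraint leaves a single admissible triangulation of each induced cycle—one diagonal for a $4$-cycle, and a fan from the spanning block for a longer cycle—so the necessary completion edges are determined. I would finish by checking that adding exactly these edges yields a chordal graph in which no completion edge can be deleted without destroying chordality, so the minimal restricted chordal completion is unique and, by Theorem~\ref{thm:key-help}, $\gamma$ defines $T$.

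I expect the main obstacle to be the bookkeeping in this last step: enumerating the vertex-induced cycles of length at least four and checking that their forced triangulations are mutually compatible and jointly chordal. A convenient way to control this is by induction on $m$, taking the $6$-leaf cherried caterpillar (a centre vertex with three pendant cherries), whose ${\rm Int}(\gamma)$ is already chordal, as the base case. The cherried caterpillar on $2m+2$ leaves arises from the one on $2m$ leaves by a balanced modification at an end cherry, and the induced change in the partition intersection graph is exactly the local surgery analysed in the proof of Lemma~\ref{balanced} and depicted in Figure~\ref{intersection2}. Running that analysis in the forward direction—verifying that the two new cherry-vertices lie on no induced $\ge 4$-cycle (so by Lemma~\ref{completion} receive no completion edge) and that the single new induced cycle forces exactly one completion edge between the two large blocks—would let the uniqueness of the minimal restricted chordal completion be inherited from the $2m$-leaf case, completing the induction.
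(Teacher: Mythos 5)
Your main argument is correct and essentially reproduces the paper's own proof: the paper uses the same internal $3$-colouring (two colours alternating along the spine path, the third colour on the hanging cherry edges), notes that ${\rm Int}(\gamma)$ is chordal when $|X|=6$ and otherwise has a unique induced cycle with at least four vertices containing a single vertex of the spanning character, argues that a completion edge joining the two spine-cutting characters would leave a cycle of their vertices that no restricted completion can triangulate, and concludes that all completion edges are incident with the spanning block, so the minimal restricted chordal completion is unique and Theorem~\ref{thm:key-help} applies. One caution about your backup induction: it cannot lean on Lemma~\ref{balanced}, which only gives the reverse implication (balanced modifications do not in general preserve definability --- that is precisely the point of the snowflake), so the forward verification you sketch amounts to re-running the same forced-fan argument; moreover, the enlarged induced cycle forces a whole fan of completion edges from the spanning block, of which only one is new relative to those inherited from the smaller tree, rather than ``exactly one completion edge'' in total.
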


\begin{proof}
Without loss of generality, we may assume that the leaf set of $T$ is $[n]$ and that its leaves are labelled as shown in Figure~\ref{orchard1}. If $n=4$, then the lemma trivially holds. So assume that $|X|\ge 6$. Let $u$ and $u'$ be the internal vertices of $T$ adjacent to the leaves $1$ and $2$, and adjacent to the leaves $n-1$ and $n$, respectively. Now consider an internal $3$-colouring $\gamma$ of $T$ that assigns each edge on the path joining $u$ and $u'$ one of two colours, say $c_1$ and $c_2$, and assigns all remaining internal edges the third colour, say $c_3$. Let $\{\chi_1, \chi_2, \chi_3\}$ denote the set of characters on $[n]$ induced by $\gamma$, where $\chi_i$ is the character induced by $c_i$ for all $i\in \{1, 2, 3\}$. If $|X|=6$, then ${\rm Int}(\C)$ is chordal, and so it has a unique minimal restricted chordal completion. Furthermore, if $|X|\ge 8$, then it is easily checked that ${\rm Int}(\C)$ has a unique vertex-induced cycle $C$ with at least four vertices. Moreover, $C$ has a unique vertex whose first coordinate is $\chi_3$. The second coordinate of this vertex is $\{1, 2, n-1, n\}$.

Let $G$ be a minimal restricted chordal completion of ${\rm Int}(\C)$. If there is a completion edge of $G$ joining a vertex in $C$ whose first coordinate is $\chi_1$ and a vertex in $C$ whose first coordinate is $\chi_2$, then $G$ contains a cycle $C'$ with at least four vertices all of which are vertices of $C$ whose first coordinates are either $\chi_1$ or $\chi_2$. But by considering the subgraph of $G$ induced by the vertices of $C'$, it is easily checked that this subgraph is not chordal, a contradiction. Therefore all of the completion edges of $G$ are incident with the unique vertex in $C$ whose first coordinate is $\chi_3$. It now follows that ${\rm Int}(\gamma)$ has a unique minimal restricted chordal completion. Since $\C$ is convex on $T$ and $\C$ distinguishes $T$, the lemma follows by Theorem~\ref{thm:key-help}.
\end{proof}

\begin{lemma}
Let $T$ be a binary phylogenetic $X$-tree, where $|X|\ge 4$, and suppose that $T$ has no internal subtree isomorphic to the snowflake. Then either $T$ is a cherried caterpillar or $T$ has a leaf not in a cherry.
\label{non-cherry-leaf}
\end{lemma}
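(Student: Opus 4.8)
The plan is to prove the contrapositive of the nontrivial alternative. I would assume that $T$ has no leaf outside a cherry (otherwise the second conclusion holds immediately), so that every leaf of $T$ lies in a cherry, and then show that $T$ is a cherried caterpillar. First I would record the resulting structural dichotomy: since $|X|\ge 4$ and $T$ is binary, if an internal vertex $u$ is adjacent to a leaf $\ell$, then the cherry partner of $\ell$ is also adjacent to $u$, and $u$ cannot be adjacent to a third leaf (else $T$ would be a three-leaf star). Hence every internal vertex of $T$ is either a \emph{cherry vertex} (adjacent to two leaves and one internal vertex) or is \emph{fully internal} (adjacent to three internal vertices). In particular the leaves pair up into cherries, so $|X|$ is even; the case $|X|=4$ is a cherried caterpillar by definition, so I may assume $|X|\ge 6$.

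Next I would pass to the reduced tree $T^*$ obtained from $T$ by deleting all of its leaves. Each cherry vertex then has degree one and each fully internal vertex retains degree three, so no degree-two vertex is created and $T^*$ is a binary phylogenetic tree whose leaf set is exactly the set of cherry vertices; it has $|X|/2\ge 3$ leaves. Since $T$ is recovered from $T^*$ by replacing each leaf (i.e., each cherry vertex) with a cherry, and conversely, $T$ is a cherried caterpillar if and only if $T^*$ is a caterpillar. Thus it suffices to prove that $T^*$ is a caterpillar. Recalling that $T^*$ is a caterpillar precisely when every internal vertex of $T^*$ is adjacent to a leaf --- equivalently, when $T^*$ has no internal vertex all three of whose neighbours are internal, i.e.\ no $3$-star internal subtree --- I will have reduced the lemma to a statement purely about $T^*$.

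The crux will be the following correspondence: $T$ has a snowflake internal subtree if and only if $T^*$ has a $3$-star internal subtree. For the forward direction, a snowflake internal subtree $S$ has a central vertex $v$ and three neighbours $a,b,c$ that all have degree three in $S$; since every edge of $S$ is internal in $T$ and $T$ is binary, each of $v,a,b,c$ has all three of its $T$-edges lying in $S$ and internal, so each of $a,b,c$ is fully internal and is adjacent to $v$, giving a $3$-star internal subtree of $T^*$ centred at $v$. Conversely, a $3$-star internal subtree of $T^*$ centred at a vertex $v$ with internal neighbours $a,b,c$ makes $v,a,b,c$ fully internal in $T$, and each of $a,b,c$ then contributes its two further internal neighbours, assembling a snowflake internal subtree of $T$ (the six grandchildren are distinct because $T$ is a tree). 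Combining this correspondence with the hypothesis that $T$ has no snowflake internal subtree shows that $T^*$ has no $3$-star internal subtree, hence $T^*$ is a caterpillar, hence $T$ is a cherried caterpillar.

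I expect the main obstacle to be the crux correspondence, and specifically the careful use of the internal-subtree and binarity constraints: the key point is that a vertex having degree three in the snowflake subtree must, because $T$ has maximum degree three, have all of its incident $T$-edges inside the subtree and therefore internal, which is exactly what upgrades ``$a,b,c$ internal'' to ``$a,b,c$ fully internal.'' Setting up $T^*$ so that adjacency among internal vertices (and hence the snowflake/$3$-star dictionary) is preserved, together with the bookkeeping that deleting the leaves of cherries creates no degree-two vertices, is routine but must be stated precisely.
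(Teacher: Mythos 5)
Your proof is correct, but it takes a genuinely different route from the paper's. The paper argues directly on $T$: it fixes two internal vertices $u,u'$ whose connecting path $P=u,v_1,\dots,v_k,u'$ has maximum length, notes by maximality that $u$, $u'$, $w_1$ and $w_k$ (where $w_i$ is the off-path neighbour of $v_i$) are each adjacent to two leaves, and then shows that if some intermediate $w_j$ were not adjacent to two leaves, the ten vertices $\{w'_j,w''_j,w_j,v_j,v_{j-1},w_{j-1},v_{j-2},v_{j+1},w_{j+1},v_{j+2}\}$ would induce a snowflake internal subtree, a contradiction; hence every $w_j$ carries a cherry and $T$ is a cherried caterpillar. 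You instead delete all leaves to form the reduced tree $T^*$, verify (using the cherry-vertex/fully-internal dichotomy) that $T^*$ is a binary phylogenetic tree, and prove the dictionary ``snowflake internal subtree of $T$ $\Leftrightarrow$ $3$-star internal subtree of $T^*$,'' so that the no-snowflake hypothesis forces $T^*$ to be a caterpillar and hence $T$ to be a cherried caterpillar. Your reduction is more conceptual: it makes explicit the structural correspondences cherried caterpillar $\leftrightarrow$ caterpillar and snowflake $\leftrightarrow$ $3$-star (the latter mirroring the relationship between Theorems~\ref{thm:main} and~\ref{two}), and it reuses the paper's own characterisation of non-caterpillars as trees containing a $3$-star internal subtree. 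The paper's longest-path argument is more elementary and self-contained, and it constructively exhibits the snowflake inside $T$ when $T$ fails to be a cherried caterpillar, at the cost of some explicit vertex bookkeeping. The only points in your write-up that need the care you yourself flag are the verification that $T\cong T_1$ is recovered from $T^*$ exactly as in the paper's definition of a cherried caterpillar (attaching a cherry at each leaf of $T^*$ is the same, up to isomorphism, as subdividing each pendant edge of a caterpillar and adjoining a new leaf), and the observation that binarity of $T$ forces all edges at a degree-three vertex of the snowflake to lie in the subtree; both are straightforward as you indicate.
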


\begin{proof}
If $T$ has a leaf not in a cherry, then the lemma holds, so suppose that every leaf of $T$ is in a cherry. If $|X|=4$, then $T$ is a cherried caterpillar, so we may assume that $|X|\ge 6$. Let $u$ and $u'$ be internal vertices of $T$ such that, amongst all pairs of internal vertices, the length of the path $P$ connecting $u$ and $u'$ is maximised. Let $P=u, v_1, v_2, \ldots, v_k, u'$, where $k\ge 1$ as $|X|\ge 6$. By maximality, $u$ is adjacent to two leaves and $u'$ is adjacent to two leaves. For all $i\in \{1, 2, \ldots, k\}$, let $w_i$ denote the vertex of $T$ adjacent to $v_i$ that is not on $P$. Since every leaf of $T$ is in a cherry, $w_i$ is not a leaf, and so $w_i$ is an internal vertex of $T$ for all $i$. Consider $w_1$. Since the length of the path connecting $w_1$ and $u'$ is the same as the length of $P$, it follows by the maximality of $P$ that $w_1$ adjacent to two leaves. Similarly, $w_k$ is also adjacent to two leaves. If $k\in \{1, 2\}$, this implies that $T$ is a cherried caterpillar. So assume that $k\ge 3$. Now consider $w_j$, where $j\in \{2, 3, \ldots, k-1\}$. If $w_j$ is not adjacent to two leaves, then, as every leaf of $T$ is in a cherry, $w_j$ is adjacent to two internal vertices $w'_j$ and $w''_j$, neither of which is $v_j$. But then the internal subtree of $T$ induced by the vertices in
$$\{w'_j, w''_j, w_j, v_j, v_{j-1}, w_{j-1}, v_{j-2}, v_{j+1}, w_{j+1}, v_{j+2}\},$$
where $v_{j-2}=u$ if $j=2$ and $v_{j+2}=u'$ if $j=k-1$, is isomorphic to the snowflake, a contradiction. Thus, for all $j$, we have that $w_j$ is adjacent to two leaves. In particular, $T$ is a cherried caterpillar. This completes the proof of the lemma.
\end{proof}

The next theorem proves the sufficient direction of Theorem~\ref{thm:main}, thereby completing its proof.

\begin{theorem}
Let $T$ be a binary phylogenetic $X$-tree, where $|X|\ge 4$, and suppose that $T$ has no internal subtree isomorphic to the snowflake. Then $T$ is defined by a set of at most three characters.
\label{3-characters}
\end{theorem}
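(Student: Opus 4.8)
The plan is to argue by induction on $|X|$, using the dichotomy provided by Lemma~\ref{non-cherry-leaf} together with Lemmas~\ref{orchard2} and~\ref{amalgam2}. For the base of the induction, if $|X|=4$ then $T$ is a cherried caterpillar and the result is immediate from Lemma~\ref{orchard2}. For the inductive step, I would assume the theorem for all binary phylogenetic trees with no internal snowflake and fewer than $|X|$ leaves. By Lemma~\ref{non-cherry-leaf}, either $T$ is a cherried caterpillar or $T$ has a leaf not in a cherry. In the first case Lemma~\ref{orchard2} applies directly, so the substantive situation is when $T$ has a leaf not in a cherry.

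So suppose $x$ is a leaf of $T$ not in a cherry, let $u$ be its neighbour, and let $v_1$ and $v_2$ be the other two neighbours of $u$. Since $x$ is not in a cherry, neither $v_1$ nor $v_2$ is a leaf, so both $\{u,v_1\}$ and $\{u,v_2\}$ are internal edges. I would then realise $T$ as a cherry union $T_1\Box T_2$ exactly as suggested by Figure~\ref{amalgam1}: let $T_1$ be obtained from $T$ by deleting everything on the $v_2$-side of $u$ and attaching a new leaf $y_1$ at $u$ (so that $(x,y_1)$ is a cherry), and let $T_2$ be obtained symmetrically on the $v_1$-side with a new leaf $y_2$. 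Unwinding the defining construction of the cherry union (delete $x$ from one factor, identify $y_1$ with $y_2$, suppress the two resulting degree-two vertices) returns $u$ adjacent to $x,v_1,v_2$, confirming $T=T_1\Box T_2$. Writing the two subtrees hanging off $v_1$ and $v_2$ with leaf sets $P_1,Q_1$ and $P_2,Q_2$, each of these four sets is non-empty, so $|X_1|=|P_1|+|Q_1|+2\ge 4$ and $|X_2|=|P_2|+|Q_2|+2\ge 4$; moreover $|X_1|+|X_2|=|X|+3$, which with $|X_2|\ge 4$ forces $|X_1|<|X|$, and symmetrically $|X_2|<|X|$. Hence the induction hypothesis is available for both $T_1$ and $T_2$.

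The key step, and the one I expect to require the most care, is verifying that neither $T_1$ nor $T_2$ has an internal subtree isomorphic to the snowflake; I would argue this for $T_1$, the case of $T_2$ being symmetric. The point is that the $v_1$-side of $T_1$ is identical to the $v_1$-side of $T$, and the only internal edge of $T_1$ incident with $u$ is $\{u,v_1\}$, since the edges $\{u,x\}$ and $\{u,y_1\}$ are pendant. Consequently any internal subtree of $T_1$ avoids the leaves $x$ and $y_1$, can use $u$ only as a degree-one vertex, and therefore consists entirely of edges that are internal edges of $T$ as well. Thus an internal snowflake in $T_1$ would sit inside $T$ with all its edges internal in $T$, contradicting the hypothesis on $T$; hence $T_1$, and likewise $T_2$, has no internal snowflake.

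Finally I would assemble the pieces: by the induction hypothesis $T_1$ and $T_2$ are each defined by a set of at most three characters, and since $X_1\cap X_2=\{x\}$ with $(x,y_1)$ and $(x,y_2)$ cherries of $T_1$ and $T_2$, Lemma~\ref{amalgam2} yields that $T=T_1\Box T_2$ is defined by a set of at most three characters. This completes the inductive step and hence the proof. The only genuinely delicate point is the snowflake-transfer argument above; the remainder is bookkeeping about leaf counts and a routine unwinding of the cherry-union construction.
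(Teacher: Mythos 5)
Your proof is correct and takes essentially the same approach as the paper's: induction on $|X|$, the dichotomy of Lemma~\ref{non-cherry-leaf}, Lemma~\ref{orchard2} for cherried caterpillars, and realising $T$ as a cherry union $T_1\Box T_2$ at a leaf not in a cherry so that Lemma~\ref{amalgam2} and the induction hypothesis apply. Your explicit argument that internal edges of $T_1$ are internal edges of $T$ (so no snowflake can appear in $T_1$ or $T_2$) correctly fills in the step the paper dismisses as ``easily checked.''
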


\begin{proof}
The proof is by induction on $n=|X|$. If $n=4$, the \blue{theorem} trivially holds. Suppose that $n\ge 5$, and that the \blue{theorem} holds for all binary phylogenetic trees whose leaf sets have size at most $n-1\ge 4$. If $T$ is a cherried caterpillar, then, by Lemma~\ref{orchard2}, $T$ is defined by a set of at most three characters. Therefore assume that $T$ is not a cherried caterpillar. Then, by Lemma~\ref{non-cherry-leaf}, $T$ has a leaf, $x$ say, not in a cherry. Let $v$ be the internal vertex of $T$ adjacent to $x$, and let $u_1$ and $u_2$ be the internal vertices of $T$ adjacent to $v$. Let $F_1$ denote the forest obtained from $T$ by deleting the two edges incident with $u_2$ that are not $\{v, u_2\}$, and let $T_1$ denote the binary phylogenetic $X_1$-tree obtained from the component of $F_1$ containing $u_1$ by relabelling $u_2$, now a leaf, as $z_1$, where $z_1\not\in X$. Similarly, let $F_2$ denote the forest obtained from $T$ by deleting the two edges incident with $u_1$ that are not $\{v, u_1\}$, and let $T_2$ denote the binary phylogenetic $X_2$-tree obtained from the component of $F_2$ containing $u_2$ by relabelling $u_1$, now a leaf, as $z_2$, where $z_2\not\in X\cup \{z_1\}$. Observe that $4\le |X_1|, |X_2|\le n-1$, $X_1\cap X_2=\{x\}$ and, more particularly, $T$ \blue{is isomorphic to} $T_1\Box T_2$.

Since $T$ has no internal subtree isomorphic to the snowflake, it is easily checked that $T_1$ has no internal subtree isomorphic to the snowflake. Thus, by the induction assumption, $T_1$ is defined by three characters. Similarly, $T_2$ is also defined by three characters. As $T$ \blue{is isomorphic to} $T_1\Box T_2$, Theorem~\ref{3-characters} now follows by Lemma~\ref{amalgam2}.
\end{proof}

\section{Discussion}
\label{sect:discuss}

Since each binary phylogenetic tree is defined by at most four characters~\cite{huber2005four}, the results presented in this paper partitions the set of all binary phylogenetic $X$-trees, where \blue{$|X|\ge 5$}, into three classes. In particular, those \blue{binary phylogenetic $X$-trees} that are defined by \blue{a set of} exactly two, those defined by \blue{a set of} three but \blue{no less than three}, and those defined by \blue{a set of} four but \blue{no less than four} characters. Moreover, given a \blue{binary phylogenetic $X$-tree} $T$, we can decide which class $T$ is contained in \blue{in time that is linear} in $|X|$. To see this, let $\mathcal C_2(X)$, $\mathcal C_3(X)$, and $\mathcal C_4(X)$ denote these three classes of binary phylogenetic trees, respectively, and note that a binary phylogenetic $X$-tree has \blue{$2|X|-2$} vertices \cite[Proposition 2.1.3]{semple2003phylogenetics}. Now, as $T$ is contained in $\mathcal C_2(X)$ if and only if $T$ is a caterpillar tree by Theorem~\ref{caterpillar2}, it is clear that containment in \blue{$\mathcal C_2(X)$} can be checked in $O(|X|)$ time. Furthermore, if $T$ is not a caterpillar, then we can again check in $O(|X|)$ time if $T$ is contained in $\mathcal C_3(X)$ or $\mathcal C_4(X)$ using Theorem~\ref{thm:main} by simply considering, \blue{for} each internal vertex $u$, the internal vertices \blue{of $T$ at distance at most} two from $u$, and checking whether or not they induce a snowflake.

There remain some interesting questions and investigations. For example, given a binary phylogenetic $X$-tree $T$ in $\mathcal C_3(X)$ or $\mathcal C_4(X)$, can we 
determine (up to the natural notion of equivalence) the number of ways that $T$ can be defined by three or four characters, respectively? Also, it would be interesting to compute
$$\lim_{|X|\to \infty} \frac{|\mathcal C_2(X) \cup \mathcal C_3(X)|}{|\mathcal C_4(X)|}.$$
This could be of practical interest since, if this limit is $0$, it would imply that, as the size of $X$ grows, almost all binary phylogenetic $X$-trees are \blue{in $\C_4(X)$}.

Finally, the notion of defining a binary phylogenetic tree can be generalised to the weaker notion of ``identifiability" \cite{BHS05}. In particular, an {\em $X$-tree} is an ordered pair $(T; \phi)$ consisting of a tree $T$ with vertex set, $V$ say, and a map $\phi:X\rightarrow V$ such that if $v\in V$ has degree at most two, then $v\in \phi(X)$. A phylogenetic $X$-tree is an $X$-tree in which $\phi$ is a bijection from $X$ to the set of leaves of $T$. Intuitively, an $X$-tree can be obtained by contracting some edges of a binary phylogenetic $X$-tree \cite{semple2003phylogenetics}. A collection of characters on $X$ {\em identifies} an $X$-tree $(T; \phi)$ if \blue{$\C$ is convex on $T$} (analogous to that described in this paper) and all other $X$-trees \blue{on which $\mathcal C$ is convex} are ``refinements" of $T$.

In \cite{bordewich2015defining}, it is proven that if $d$ is the maximum degree of any vertex in an $X$-tree $(T; \phi)$ and $k$ is a positive integer, then, in case $$k=4\lceil \log_2(d-2)\rceil+4,$$
there is a collection of $k$ characters that identifies $(T; \phi)$ and, in case $k < \log_2 d$, there is no collection of $k$ characters that identifies $T$. Bearing this in mind, it would be interesting to investigate whether the results in \blue{the present} paper for binary phylogenetic trees can be extended in some way to $X$-trees.

\section*{Acknowledgements}

KTH and VM would like to thank \blue{The Royal Society in the context of its International Exchanges Scheme} for support and also the University of Canterbury for hosting them during a brief visit. SL and CS were supported by the New Zealand Marsden Fund. All authors would like to thank the  Institute for Mathematical Sciences, National University of Singapore where some of the research was partially completed while they were visiting in 2023.


\begin{thebibliography}{99}
\bibitem{BHS05} M.\ Bordewich, K.\ T.\ Huber, and C.\ Semple. Identifying phylogenetic trees. {\em Discrete Mathematics}, 300: 30--43, 2005.

\bibitem{bordewich2015defining} M.\ Bordewich and C.\ Semple. Defining a phylogenetic tree with the minimum number of $r$-state characters, {\em SIAM Journal on Discrete Mathematics}, 29: 835--853, 2015.

\bibitem{buneman1971recovery} P.\ Buneman. The recovery of trees from measures of dissimilarity. In {\em Mathematics in the Archaeological and Historical Sciences}. Edinburgh University Press, 1971.

\bibitem{buneman1974characterisation} P.\ Buneman. A characterisation of rigid circuit graphs. {\em Discrete Mathematics}, 9: 205--212, 1974.

\bibitem{huber2005four} K.\ T.\ Huber, V.\ Moulton, and M.\ Steel. Four characters suffice to convexly define a phylogenetic tree. {\em SIAM Journal on Discrete Mathematics}, 18: 835--843, 2005.

\bibitem{lemey2009phylogenetic} P.\ Lemey, M.\ Salemi, and A.-M.\ Vandamme. {\em The Phylogenetic Handbook: A Practical Approach to Phylogenetic Analysis and Hypothesis Testing}. Cambridge University Press, 2009.
	
\bibitem{meacham1983theoretical} C.\ A.\ Meacham. Theoretical and computational considerations of the compatibility of qualitative taxonomic characters. In {\em Numerical Taxonomy}, volume G1 of {\em NATO ASI Series}, pages 304--314. Springer-Verlag, Berlin, 1983.

\bibitem{sem02} C.\ Semple and M.\ Steel. A characterization for a set of partial partitions to define an $X$-tree. {\em Discrete Mathematics}, 247: 169--186, 2002.

\bibitem{semple2002tree}	C.\ Semple and M.\ Steel. Tree reconstruction from multi-state characters. {\em Advances in Applied Mathematics}, 28: 169--184, 2002.

\bibitem{semple2003phylogenetics} C.\ Semple and M.\  Steel. {\em Phylogenetics}, Oxford University Press, Oxford, 2003.

\bibitem{steel1992complexity} M.\ Steel. The complexity of reconstructing trees from qualitative characters and subtrees. {\em Journal of Classification}, 9: 91--116, 1992.
\end{thebibliography}

\end{document}